\newtheorem{theorem}{Theorem}
\newtheorem{lemma}{Lemma}
\newtheorem{proposition}[theorem]{Proposition}
\newtheorem{remark}{Remark}
\newenvironment{proof}[1][Proof]{\noindent\textbf{#1.} }{\ \rule{0.5em}{0.5em}}
\tikzstyle{vertex} = [fill,shape=circle,node distance=80pt]
\tikzstyle{edge} = [fill,opacity=.5,fill opacity=.5,line cap=round, line join=round, line width=50pt]
\tikzstyle{elabel} =  [fill,shape=circle,node distance=30pt]
\begin{document}

\title{Functional Differencing in Networks\thanks{%
		This paper has been prepared for the special issue of the Revue \'Economique in honor of Jean-Marc Robin. We thank the editor, an anonymous reviewer, Jaime Arellano-Bover, Jes\'us Carro, Martin Weidner, and Tom Zohar for comments.}}
\author{St\'ephane
	Bonhomme\thanks{%
		University of Chicago.} \and Kevin Dano\thanks{University of California Berkeley.}}

\date{$\quad $\\\today }
\vskip 3cm\maketitle

\begin{abstract}
	\noindent Economic interactions often occur in networks where heterogeneous agents (such as workers or firms) sort and produce. However, most existing estimation approaches either require the network to be dense, which is at odds with many empirical networks, or they require restricting the form of heterogeneity and the network formation process. We show how the functional differencing approach introduced by \citet{bonhomme2012functional} in the context of panel data, can be applied in network settings to derive moment restrictions on model parameters and average effects. Those restrictions are valid irrespective of the form of heterogeneity, and they hold in both dense and sparse networks. We illustrate the analysis with linear and nonlinear models of matched employer-employee data, in the spirit of the model introduced by \citet{abowd1999high}.  
	
	\bigskip

	%\noindent \textsc{JEL codes:}\textbf{\ } .
	
	\noindent \textsc{Keywords:}\textbf{\ } Econometric models of networks, matching, sorting, heterogeneity, functional differencing. 
\end{abstract}

\baselineskip21pt

\bigskip

\bigskip

\setcounter{page}{0}\thispagestyle{empty}

\clearpage

\section{Introduction}

Network data is increasingly prevalent in applied economics. In this paper we focus on models where agents (e.g., workers and firms) sort and interact on a network. In such settings, accounting for unobserved heterogeneity is empirically key. However, existing approaches to estimation in the presence of flexible heterogeneity are imperfect.

A first approach consists in treating the heterogeneity as ``fixed effects'' parameters to be estimated. Bias reduction methods, initially developed for single-agent panel data (\citealp{hahn2004jackknife}, \citealp{dhaene2015split}, \citealp{fernandez2016individual}), have been recently extended to networks (e.g., \citealp{graham2017econometric}, \citealp{hughes2022estimating}). The fixed-effects approach is appealing since it does not require modeling the distribution of heterogeneity and how it correlates with conditioning variables. Additionally, in models where agents interact on an exogenous network, the fixed-effects approach does not require specifying a model of network formation. 

However, the performance of bias reduction methods in fixed-effects models hinges crucially on the network being sufficiently dense. This requirement is at odds with the nature of several empirical networks. For instance, in applications to wage determination in the presence of worker and firm heterogeneity (\citealp{abowd1999high}), a dense network approximation is typically inappropriate, and fixed-effects estimates suffer from a ``limited mobility bias'' that may be substantial (\citealp{bonhomme2020much}). More generally, sparsity is a feature of many empirical networks (\citealp{graham2020sparse}).

%Consider for example the network formation model proposed by \citet{graham2017econometric}. In this model, the fixed-effects maximum likelihood estimator 
%
%
%. However, restrictive for practice. Many networks are sparse. Same for general models with nuisance parameters on networks (BJW).

A second approach consists in postulating a ``random effects'' model for the unobserved heterogeneity. For example, \citet{bonhomme2019distributional} and \citet{lentz2022anatomy} propose and estimate random-effects models of worker heterogeneity in the presence of firm heterogeneity to account for sorting and complementarity on the labor market. Studying a different setting, \citet{bonhomme2021teams} develops random effects models of agent heterogeneity in team production networks. Random-effects methods enjoy theoretical guarantees in sparse networks, under the assumption that the model is correctly specified. 

However, modeling the full distribution of heterogeneity given conditioning variables can be challenging. In models of wage determination in the presence of worker and firm heterogeneity, this requires modeling the heterogeneity conditional on the entire network of employment relationships and job transitions, as proposed, for example, by \citet{woodcock2008wage} and \citet{bonhomme2020much}. More generally, the random-effects approach effectively requires modeling the network formation process, which can be a difficult task due to dimensionality and equilibrium multiplicity challenges.

In this paper our aim is to achieve the best of these two approaches, in the sense that we seek estimators that are fully robust to the form of heterogeneity, and that behave well in denser and sparser networks. Such estimators currently exist only in very special cases. Notably, \citet{andrews2008high} and \citet{kline2020leave} propose exact bias corrections for fixed-effects estimators of variance components in linear regressions on networks, while \citet{graham2017econometric} proposes a ``tetrad logit'' estimator in a logistic model of network formation. These strategies mimic panel data methods that are consistent in fixed-length panels, such as conditional logit estimators (\citealp{rasch1960studies}, \citealp{andersen1970asymptotic}) or estimators of variance components (\citealp{arellano2012identifying}).

In panel data, the functional differencing approach (\citealp{bonhomme2012functional}) provides a general methodology to find moment restrictions on parameters that are robust to any distribution of heterogeneity and correlation with conditioning variables, and hold in fixed-length panels. Recent applications of the approach include the derivation of new moment restrictions in binary and discrete choice models, both static and dynamic (\citealp{honore2020moment}, \citealp{honore2021dynamic}, \citealp{dano2023transition}). In certain models, no exact moment restrictions exist. However, \citet{dhaene2023approximate} show how to regularize the functional differencing moments to provide restrictions that are satisfied up to a vanishing approximation error as the length of the panel tends to infinity. %See \citet{dobronyi2021identification} for a derivation of moment equality and inequality restrictions in dynamic binary choice models. 

The starting point of this paper is the observation that the scope of functional differencing is not limited to panel data, and the approach can be applied to any setting with a parametric conditional distribution involving latent variables. Our main goal is to apply the functional differencing approach to derive moment restrictions on parameters in some network settings. Specifically, we consider linear and binary choice logit models on networks, including a novel ``AKM logit model'' that provides a counterpart to the AKM estimator of \citet{abowd1999high} for binary outcomes. In those models, we characterize the available moment restrictions on parameters.

In addition, we study average effects that depend on the joint distribution of unobserved heterogeneity and observed covariates. In panel binary choice models, average effects have been studied by various authors (see, e.g., \citealp{chernozhukov2013average}, \citealp{davezies2021identification}, \citealp{dobronyi2021identification}, \citealp{aguirregabiria2021identification}, and \citealp{pakel2021bounds}). The functional differencing approach can be applied to average effects, as initially shown in the working paper version of \citet{bonhomme2012functional}. This approach applies to general panel data models where the outcome distribution is parametrically specified and the distribution of heterogeneity and covariates is unrestricted. Here we use it to derive moment conditions on average effects in network settings.

Lastly, as in its panel data applications, in the settings that we consider in this paper the functional differencing approach delivers moment restrictions on parameters, yet it does not guarantee identification or consistent estimation of the parameters given those restrictions. Although, in several examples that we study, identification can be verified directly and analog estimators can be constructed, applying the approach to other models will generally require careful analysis of statistical properties. We only briefly touch on estimation at the end of the paper, and leave a deeper analysis of identification and estimation to future work. At the same time, we see the functional differencing approach as a promising building block for researchers to discover novel moment restrictions and estimators in network settings in the future.

The outline of the paper is as follows. In Section \ref{sec_mod} we present a class of models with multi-sided heterogeneity in networks. In Section \ref{sec_FD} we describe the functional differencing approach. In Sections \ref{sec_ex_lin}, \ref{sec_ex_logit}, \ref{sec_ex_logitAKM}, and \ref{sec_logit_AME} we illustrate the approach with various examples. Finally, in Section \ref{sec_remarks} we briefly sketch how to construct estimators based on functional differencing moment restrictions.

\section{Models with multi-sided heterogeneity in networks\label{sec_mod}}

In this section we introduce and describe a framework for heterogeneous agents interacting on a network. The subsequent sections show how to derive moment restrictions on parameters and average effects in this framework.

\subsection{Description and examples}

 The model consists of two layers: a model of the network, and a model of agents' outcomes on the network.
 
The \emph{network} is represented by a graph, or more generally a hypergraph, featuring nodes and edges. Nodes in the network correspond to economic agents, and edges represent their links or collaborations. The network can be static or dynamic, and we will see that, under the assumption that the network is exogenous, a specification of the network formation model is not needed. 

A key feature of the framework is the presence of agent-specific \emph{types}, which govern sorting patterns and affect outcomes, yet are latent to the econometrician.  

Throughout the paper, we will refer to economic agents as workers and firms as a leading example. In settings with workers and firms, we model the network as bipartite, links are employment relationships, and both workers and firms are heterogeneous. We show an example in Figure \ref{Fig_bipart}. Employment, job mobility and wages all depend on the workers' and firms' latent types. Prominent models of this kind were proposed in \citet{becker1973theory}, \citet{shimer2000assortative}, \citet{postel2002equilibrium}, and \citet{lentz2022anatomy}, among many others.

\begin{figure}

\definecolor{mygrey}{RGB}{169,169,169}
\definecolor{myblack}{RGB}{0,0,0}

\caption{A worker-firm network\label{Fig_bipart}}	
\begin{center}
\begin{tikzpicture}[thick,
		every node/.style={circle},
		fsnode/.style={fill=mygrey},
		ssnode/.style={fill=myblack},
		]

		% the vertices of U
		\begin{scope}[xshift=0cm,yshift=2cm,start chain=going below,node distance=10mm]
			\foreach \i in {1,2,...,5}
			\node[fsnode,on chain] (f\i) [label=left: \i] {};
		\end{scope}
		
		% the vertices of V
		\begin{scope}[xshift=5cm,yshift=.5cm,start chain=going below,node distance=10mm]
			\foreach \i in {1,2,...,3}
			\node[ssnode,on chain] (s\i) [label=right: \i] {};
		\end{scope}
		
		% the set U
		\node [mygrey,fit=(f1) (f5),label=above:Workers] {};
		% the set V
		\node [myblack,fit=(s1) (s3),label=above:Firms] {};
		
		% the edges
		\draw (f1) -- (s1);
		\draw (f2) -- (s1);
		\draw (f2) -- (s2);
		\draw (f3) -- (s1);
		\draw (f3) -- (s3);
		\draw (f4) -- (s2);
		\draw (f5) -- (s3);
	\end{tikzpicture}
\end{center}
{\footnotesize\textit{Notes: Nodes are workers (numbered from 1 to 5) and firms (numbered from 1 to 3). Edges are lines linking workers to firms, representing employment relationships. Workers 1,4,5 stay in a single firm, while workers 2 and 3 move between two firms.}}
\end{figure}
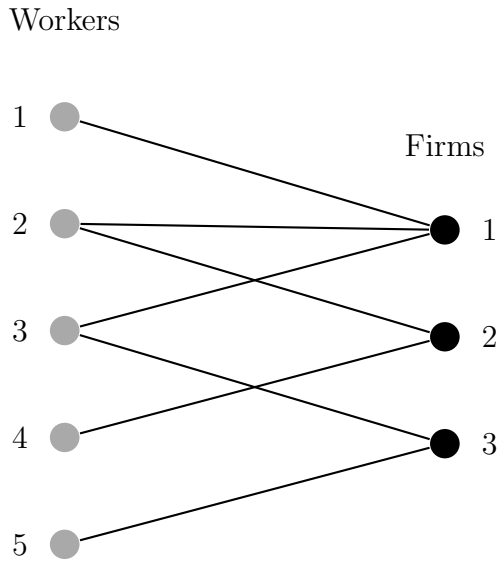

\emph{Outcomes} in the network depend on the agents' latent types, and on covariates that are observed by the econometrician. Outcomes also depend on idiosyncratic errors, or ``shocks''. 

A key assumption that we will maintain in this paper is that the network is \emph{exogenous}, in the sense that network links are independent of the shocks conditional on agents' types and covariates. In models with workers and firms, \citet{bonhomme2019distributional} show that this assumption is satisfied in the models of \citet{shimer2000assortative} and \citet{lentz2022anatomy} but that it fails in the model with sequential bargaining of \citet{postel2002equilibrium}, for example. 

Our main focus will be on estimating parameters governing the model of outcomes while conditioning on the network. This approach only restricts the network insofar as exogeneity with respect to idiosyncratic shocks is required. However, it allows for general forms of matching and sorting patterns. Beyond  the example of workers and firms, this setup is relevant to other bipartite network settings appearing in the economics of education, empirical finance, and trade (\citealp{bonhomme2020econometric}). Non-bipartite settings, such as models of team production, are also encompassed by this framework (\citealp{ahmadpoor2019decoding}, \citealp{bonhomme2021teams}).

In applications, the network formation model may also be of interest. Since the framework accounts for unobserved heterogeneity, it can be used to study network formation models with additive or non-additive heterogeneity and independent link-specific shocks (\citealp{graham2017econometric}, \citealp{bickel2009nonparametric}). However, since our approach relies on a parametric likelihood for the distribution of outcomes, it is not well-suited to study the determinants of link formation in models with strategic interactions (\citealp{de2018identifying}, \citealp{gualdani2021econometric}, \citealp{sheng2020structural}). 

\subsection{Probabilistic framework}

Let $Y$ denote a vector of outcomes, one for every link (or edge) in the network. Let $A$ denote the vector of latent types, one for every agent (or node). Finally, let $X$ denote a matrix indicating which agents are linked together. In models with covariates, which may be agent-specific or link-specific, we include those in $X$. 

We postulate a parametric model for outcomes conditional on the latent types and the network links (and possibly covariates),
\begin{equation}\label{eq_outcomes}
\left(Y\,|\, X=x,A=a\right) \sim f_{\theta}\left(\cdot\,|\, x,a\right),
\end{equation}
where $f_{\theta}$ is a parametric distribution indexed by a finite-dimensional parameter vector $\theta$.

We leave the joint distribution of latent types and network links (and possibly covariates) fully unrestricted. Formally,
\begin{equation}\label{eq_hetero}
	\left(A,X\right) \sim \pi,
\end{equation}
where $\pi$ is an unknown (i.e., nonparametric) distribution. 

The combination of a parametric outcome distribution and a nonparametric distribution of heterogeneity and covariates is common in the panel data literature. Indeed, (\ref{eq_outcomes})-(\ref{eq_hetero}) nests the standard panel data case, where $X$ simply indicates which observations correspond to the same worker. Nevertheless, the current framework is not limited to panel data. 

A key assumption implied by the specification of a likelihood conditional on $X$ (and $A$) is that the network is assumed exogenous. In panel data models, this corresponds to the assumption of strict exogeneity of covariates. This assumption is commonly relaxed in linear panel models, for example using the sequential moment restrictions introduced by \citet{arellano1991some}. A subsequent literature allows for sequentially exogenous network links in linear models (e.g., \citealp{kuersteiner2020dynamic}). Allowing for sequential exogeneity of covariates in nonlinear panel data models is still a frontier research area (see \citealp{bonhomme2023identification}). 

In a setting with workers and firms, a leading example of (\ref{eq_outcomes}) is a model of wage determination. Following the pioneering approach of \citet{abowd1999high}, a linear specification for log-wages is
\begin{equation}\label{eq_AKM}
Y_{it}=\alpha_i+\psi_{j(i,t)}+\varepsilon_{it},	
\end{equation}
where $i\in\{1,...,N\}$ are workers, $t\in\{1,...,T\}$ are time periods, $j(i,t)\in\{1,...,J\}$ denotes the firm where $i$ is employed at $t$, $\varepsilon_{it}$ is an idiosyncratic shock, and we are abstracting from exogenous covariates for simplicity. In this model, network exogeneity is often referred to as ``exogenous mobility'', meaning that job transitions (represented by the firm indicators $j(i,t)$) are assumed independent of the $\varepsilon_{it}$'s conditional on the $\alpha_i$'s and the $\psi_j$'s.

In order to allow for complementarity patterns, one may be interested in a nonlinear extension of (\ref{eq_AKM}), such as a Constant Elasticity of Substitution (CES) specification in logarithms,
\begin{equation}\label{eq_AKM_CES}
	Y_{it}=\ln \lambda+\frac{1}{\gamma}\ln\left( \rho\alpha_i^{\gamma}+(1-\rho)\psi_{j(i,t)}^{\gamma}\right)+\varepsilon_{it}.
\end{equation}
Nonlinear log wage specifications have been proposed and estimated by \citet{bonhomme2019distributional} and \citet{lentz2022anatomy}. However, a difference between their approaches and the one we propose here is that those authors restrict the distribution of heterogeneity and how it relates to employment relationships and job transitions. That is, in both papers, $\pi$ in (\ref{eq_hetero}) is restricted, while it is fully unrestricted in the current framework. In addition, both \citet{bonhomme2019distributional} and \citet{lentz2022anatomy} rely on large firms to recover firm heterogeneity, whereas our approach will yield valid moment restrictions irrespective of the degree of sparsity of the network. 

In models (\ref{eq_AKM}) and (\ref{eq_AKM_CES}) one can form an $NT\times (N+J)$  matrix $X$, by having each row denoting an observation (or edge) $(i,t)$, and having both sets of agents (or nodes) $i$ and $j$ in the columns of $X$. In this case, one can equivalently write (\ref{eq_AKM}) and (\ref{eq_AKM_CES}) as
$$(3')\,\,\, Y=XA+\varepsilon,\quad \text{ and }\quad (4')\,\,\, Y=g\left(X,A,\lambda,\rho,\gamma\right)+\varepsilon,$$
where the vector $A$ contains the $\alpha$'s and $\psi$'s, and the function $g$ takes a known (CES) form. Since we assume that the model is parametric, we specify the distribution of $\varepsilon$ conditional on $X$ and $A$, for example as a normal distribution with zero mean and diagonal covariance matrix with variance $\sigma^2$. In this case, $\theta=\sigma^2$ in the linear model, and $\theta=(\lambda,\rho,\gamma,\sigma^2)$ in the CES model.

The framework in (\ref{eq_outcomes})-(\ref{eq_hetero}) also nests certain models of network formation. As an example, consider the model of link formation in \citet{graham2017econometric}. Undirected binary links $Y_{ij}\in\{0,1\}$ between agents $i$ and $j$ are determined based on covariates $X_{ij}$ and agent-specific types $A_i$ and $A_j$ as
\begin{equation}\label{eq_graham}
	Y_{ij}=\boldsymbol{1}\left\{X_{ij}'\theta+A_i+A_j+\varepsilon_{ij}\geq 0\right\},
\end{equation}
where the $\varepsilon_{ij}$ are standard logistic, independent of $A$ and $X$, and i.i.d. across pairs of agents $(i,j)\in\{1,...,N\}^2$, $i\neq j$. \citet{graham2017econometric} leaves the joint distribution of $X$ and $A$ unrestricted, and his model is thus a special case of the framework in (\ref{eq_outcomes})-(\ref{eq_hetero}).

\section{Functional differencing: A general presentation\label{sec_FD}}

In the framework (\ref{eq_outcomes})-(\ref{eq_hetero}) we ask two questions. First, how can one derive moment restrictions on $\theta$? Second, how can one find moment restrictions on quantities depending on $\theta$ and $(A,X)$, such as average effects? Both questions can be answered using the functional differencing approach, and we address them in turn.

\subsection{Parameter $\theta$}

To answer the first question, the following proposition shows how to characterize moment restrictions on $\theta$ that are valid irrespective of the heterogeneity distribution.

\begin{proposition}\label{propo_FD}
	Let $\phi_{\theta}(y,x)$ be a function of outcomes and network links (and/or covariates). Suppose that $\mathbb{E}\left[\phi_{\theta}(Y,X)\,|\, A,X\right]$ is bounded. The following three statements are equivalent:
	
\noindent	(i) For any joint distribution of $(A,X)$ we have
	$$\mathbb{E}\left[\phi_{\theta}(Y,X)\right]=0.$$
	
\noindent	(ii) Almost surely in $A,X$, $$\mathbb{E}\left[\phi_{\theta}(Y,X)\,|\, A,X\right]=0.$$
	
\noindent	(iii) For almost all $a$ and $x$,
	\begin{equation}\int \phi_{\theta}(y,x)f_{\theta}(y\,|\, x,a)dy=0.\label{eq_FD}\end{equation}
	\end{proposition}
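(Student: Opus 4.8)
The plan is to prove the chain of implications (iii) $\Rightarrow$ (ii) $\Rightarrow$ (i) $\Rightarrow$ (iii), since two of these three steps are essentially immediate and only one requires any thought. The equivalence of (ii) and (iii) is just the definition of conditional expectation under the model \eqref{eq_outcomes}: by construction $\mathbb{E}[\phi_\theta(Y,X)\,|\,A=a,X=x]=\int \phi_\theta(y,x)f_\theta(y\,|\,x,a)\,dy$, so (iii) holding for almost all $(a,x)$ is the same as (ii) holding almost surely, where ``almost all'' and ``almost surely'' are understood with respect to any admissible distribution $\pi$ of $(A,X)$. The boundedness hypothesis is what makes the conditional expectation well-defined and finite, so this step should be stated carefully but carries no real difficulty.

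Next I would do (ii) $\Rightarrow$ (i): if $\mathbb{E}[\phi_\theta(Y,X)\,|\,A,X]=0$ almost surely under a given $\pi$, then by the law of iterated expectations $\mathbb{E}[\phi_\theta(Y,X)]=\mathbb{E}\big[\mathbb{E}[\phi_\theta(Y,X)\,|\,A,X]\big]=0$; the boundedness assumption guarantees the outer expectation exists. Since (ii) as stated is required to hold for every joint distribution of $(A,X)$ — or rather, the natural reading is that (iii) is the distribution-free condition and (ii) is its restatement — this gives (i) for any $\pi$.

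The one step that needs an argument is (i) $\Rightarrow$ (iii), or equivalently (i) $\Rightarrow$ (ii): from the fact that the unconditional moment vanishes for \emph{every} choice of $\pi$, we must deduce that the conditional moment vanishes pointwise (a.e.). The standard device is to exploit the freedom in choosing $\pi$: fix any candidate ``bad set'' $B$ of pairs $(a,x)$ on which, say, $h(a,x):=\int\phi_\theta(y,x)f_\theta(y\,|\,x,a)\,dy$ is strictly positive (the negative case is symmetric), and suppose for contradiction that $B$ has positive measure under some reference measure. Then choose $\pi$ to be (a version of) a distribution supported on $B$ — for instance a point mass at a single $(a,x)\in B$ where $h>0$, if singletons are admissible, or more safely a distribution concentrated on a positive-measure subset of $B$ where $h$ is bounded away from $0$. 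Under that $\pi$, $\mathbb{E}[\phi_\theta(Y,X)]=\mathbb{E}_\pi[h(A,X)]>0$, contradicting (i). Hence $h=0$ a.e., which is (iii). The main obstacle — and the only place one must be careful — is the measure-theoretic bookkeeping in this last step: making sure the class of ``any joint distribution of $(A,X)$'' is rich enough to include the concentrated distributions used to isolate the bad set (Dirac masses or their regularizations), and ensuring $h$ is measurable so that ``$\{h>0\}$ has positive measure'' is meaningful; the boundedness of $\mathbb{E}[\phi_\theta(Y,X)\,|\,A,X]$ is exactly what is needed to apply dominated convergence / iterated expectations throughout. This is the familiar argument from \citet{bonhomme2012functional}, simply transported from the panel index to the network index.
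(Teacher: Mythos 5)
Your proof is correct, and its skeleton matches the paper's: the equivalence of (ii) and (iii) is definitional, (ii) $\Rightarrow$ (i) is the law of iterated expectations, and the only substantive step is (i) $\Rightarrow$ (ii)/(iii), which both you and the paper obtain by exploiting the freedom to choose $\pi$. The execution of that step differs slightly. The paper proves a small lemma (Lemma \ref{lem_for_proofs} in Appendix \ref{app_proofs}): testing $g(a,x)=\mathbb{E}[\phi_\theta(Y,X)\,|\,A=a,X=x]$ against all nonnegative densities extends, via the decomposition $\ell=\ell^{+}-\ell^{-}$, to all $\ell\in L^1$, and then $L^1$--$L^\infty$ duality together with the boundedness hypothesis gives $\operatorname{ess\,sup}|g_k|=\sup_{\|\ell\|_1=1}\left|\int g_k\ell\right|=0$, so $g=0$ almost everywhere. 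You instead argue by contradiction, concentrating $\pi$ on a positive-measure subset of $\{h>0\}$ where $h$ is bounded away from zero. These are two standard finishes of the same idea; yours is marginally more elementary, the paper's yields the clean dual-norm identity. One small caution: drop the Dirac-mass variant rather than hedge on it --- the ``almost all'' in (ii)--(iii) is relative to a dominating measure, and the relevant class of test distributions (as in the paper's lemma) is those with densities with respect to it; point masses would be aiming at a pointwise statement under a different admissibility convention and are not needed. Your ``safer'' version, a density supported on a positive-measure set where $h\ge 1/n$, is exactly the right argument and delivers the a.e.\ conclusion the proposition asserts.
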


It is immediate that (ii) implies (i), and that (ii) and (iii) are equivalent. The implication from (i) to (ii) comes from the fact that we require the moment restriction to hold irrespective of the distribution of $A$ and $X$. A formal argument is provided in Appendix \ref{app_proofs}. Note also that (ii) implies moment restrictions conditional on $X$, $\mathbb{E}\left[\phi_{\theta}(Y,X)\,|\, X\right]=0$. Proposition \ref{propo_FD} implies that, to look for moment restrictions on $\theta$, it is necessary and sufficient to find solutions to the linear functional equation in (\ref{eq_FD}).    

Proposition \ref{propo_FD} is the main insight of the functional differencing approach (\citealp{bonhomme2012functional}). Indeed, finding a $\phi$ satisfying (\ref{eq_FD}) amounts to finding an element in the null space of the conditional expectation operator associated with the parametric conditional model of $Y$ given $(A,X)$. To this end, \citet{bonhomme2012functional} proposed numerical projection methods, while \citet{honore2020moment} relied on symbolic computing to find analytical $\phi$ functions in dynamic discrete choice settings. In some models, however, one can show that no non-trivial solution $\phi$ exists, which implies the absence of informative moment equality restrictions on $\theta$. To deal with such cases, \citet{dhaene2023approximate} propose an approximate functional differencing approach. In dynamic panel data logit models, \citet{dobronyi2021identification} derive the identified set on the parameters, which combines moment equality and inequality restrictions.

While most applications of the functional differencing approach so far are confined to panel data settings without interactions between agents, Proposition \ref{propo_FD} makes clear that the scope of the approach is not limited to those settings. A conventional panel data model consists of a collection of individual-specific submodels indexed by the individual heterogeneity $A_i$, the vector of individual covariates $(X_{i1}',...,X_{iT}')'$, and the parameter $\theta$ that is common across individuals. In contrast, in network settings all units may potentially be related, and both the network matrix $X$ and the vector of heterogeneity $A$ may affect all outcomes in the network. As Proposition \ref{propo_FD} illustrates, this difference between panel data and network settings is immaterial from the point of view of the applicability of functional differencing. In the next sections we will provide examples to illustrate the usefulness of the functional differencing approach when applied to networks.   

\subsection{Average effects}

We now turn our attention to linear functionals of the distribution of $A$ and $X$, and answer the second question. Using functional differencing, we show how to obtain, for given $\theta$, moment restrictions on an average effect of the form 
$$\mu=\mathbb{E}[m_{\theta}(A,X)],$$
where $m_{\theta}(\cdot)$ is known given $\theta$, and the expectation is taken with respect to the joint distribution $\pi$ of $(A,X)$. As an example, in the CES model of wage determination (\ref{eq_AKM_CES}), one may be interested in estimating average marginal effects of worker or firm heterogeneity on log wages, while accounting for the presence of complementarity between worker and firm effects. The following proposition provides a counterpart to Proposition \ref{propo_FD} for such target parameters.

\begin{proposition}\label{prop_FD_marg}

	Let $\psi_{\theta}(y,x)$ be a function of outcomes and network links (and/or covariates). Suppose that $\mathbb{E}\left[\psi_{\theta}(Y,X)\,|\, A,X\right]-m_{\theta}(A,X)$ is bounded. The following three statements are equivalent:

\noindent (i) For any joint distribution of $(A,X)$ we have
$$\mathbb{E}\left[\psi_{\theta}(Y,X)\right]=\mathbb{E}[m_{\theta}(A,X)].$$

\noindent (ii) Almost surely in $A,X$, $$\mathbb{E}\left[\psi_{\theta}(Y,X)\,|\, A,X\right]=m_{\theta}(A,X).$$

\noindent (iii) For almost all $a$ and $x$,
\begin{equation}\int \psi_{\theta}(y,x)f_{\theta}(y\,|\, x,a)dy=m_{\theta}(a,x).\label{eq_FD_marg}\end{equation}

\end{proposition}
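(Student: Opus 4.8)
The plan is to establish the chain of equivalences exactly as in Proposition~\ref{propo_FD}, carrying the inhomogeneous right-hand side $m_{\theta}(a,x)$ along as a harmless additive offset. The equivalence of (ii) and (iii) is immediate from the model: under (\ref{eq_outcomes}) the conditional law of $Y$ given $(A=a,X=x)$ is $f_{\theta}(\cdot\,|\,x,a)$, hence $\mathbb{E}[\psi_{\theta}(Y,X)\,|\,A=a,X=x]=\int \psi_{\theta}(y,x)f_{\theta}(y\,|\,x,a)\,dy$ for almost every $(a,x)$, where the boundedness hypothesis guarantees that this conditional expectation is well defined; statements (ii) and (iii) then say the same thing, up to the usual caveat about the reference measure against which ``almost all'' is read.

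The easy direction (ii) $\Rightarrow$ (i) follows by integrating the identity in (ii) over $(A,X)\sim\pi$: by the law of iterated expectations, $\mathbb{E}[\psi_{\theta}(Y,X)]=\mathbb{E}\big[\mathbb{E}[\psi_{\theta}(Y,X)\,|\,A,X]\big]=\mathbb{E}[m_{\theta}(A,X)]$, and this holds for every $\pi$. Here the assumed boundedness of $\mathbb{E}[\psi_{\theta}(Y,X)\,|\,A,X]-m_{\theta}(A,X)$ is what makes the interchange legitimate and both sides finite.

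The substantive step is (i) $\Rightarrow$ (ii), which I would prove by contraposition, mirroring the argument in Appendix~\ref{app_proofs} for Proposition~\ref{propo_FD}. Set $r_{\theta}(a,x):=\mathbb{E}[\psi_{\theta}(Y,X)\,|\,A=a,X=x]-m_{\theta}(a,x)$, a bounded measurable function under the hypotheses. If (ii) fails, then $r_{\theta}$ is nonzero on a set of positive measure, and hence, replacing $r_{\theta}$ by $-r_{\theta}$ if needed, strictly positive on a measurable set $B$ of positive measure. Choosing a distribution $\pi$ for $(A,X)$ concentrated on $B$ --- a baseline distribution restricted to $B$, or a point mass at a point of $B$ with $r_{\theta}>0$ if point masses are admissible --- yields $\mathbb{E}_{\pi}[\psi_{\theta}(Y,X)]-\mathbb{E}_{\pi}[m_{\theta}(A,X)]=\mathbb{E}_{\pi}[r_{\theta}(A,X)]>0$, contradicting (i). Since (i) is assumed to hold for \emph{every} joint distribution of $(A,X)$, statement (ii) must hold.

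The main obstacle is the measure-theoretic bookkeeping in this last step: one must fix the reference measure against which ``almost surely'' and ``for almost all $a$ and $x$'' are read, check that the violation set $B$ is measurable (which follows from measurability of $r_{\theta}$, inherited from that of $\psi_{\theta}$, $m_{\theta}$, and $f_{\theta}$), and verify that the admissible class of distributions $\pi$ is rich enough to contain one concentrated on $B$ --- which is exactly the force of the ``irrespective of the distribution of $(A,X)$'' requirement. None of this interacts with $m_{\theta}$, so the proof is a verbatim adaptation of the proof of Proposition~\ref{propo_FD} with the term $m_{\theta}$ transported unchanged through each step.
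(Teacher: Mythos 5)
Your proposal is correct, and its overall skeleton matches the paper's: (ii)$\Leftrightarrow$(iii) and (ii)$\Rightarrow$(i) are treated as immediate, and the substance lies in (i)$\Rightarrow$(ii), which both you and the paper obtain by exploiting the quantifier ``for any joint distribution of $(A,X)$.'' The execution of that step differs, though. The paper proves a reusable auxiliary lemma (Lemma \ref{lem_for_proofs}): if a bounded function $g$ satisfies $\int g\,h=0$ for every nonnegative density $h$, then $g=0$ almost everywhere; the proof extends the test class from densities to all of ${\cal L}^1$ by a positive/negative-part decomposition and then invokes the ${\cal L}^1$--${\cal L}^\infty$ duality $\sup_z|g_k(z)|=\sup_{\|\ell\|_1=1}\big|\int g_k\ell\big|$. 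Applying this with $g(A,X)=\mathbb{E}[\psi_\theta(Y,X)\,|\,A,X]-m_\theta(A,X)$ gives (ii), and the same lemma serves Proposition \ref{propo_FD} verbatim. You instead argue the contrapositive directly: if $r_\theta\neq 0$ on a positive-measure set $B$, pick $\pi$ concentrated on $B$ and contradict (i). This is essentially the lemma's contrapositive proved constructively, and it is arguably more elementary; the paper's route buys a single statement covering vector-valued moment functions and both propositions at once, and, by testing only against densities with respect to $dz$, it fixes the reference measure for ``almost all'' without further discussion. Two small points in your version: if $B$ has infinite Lebesgue measure you should intersect it with a large ball before normalizing the indicator into a density (trivial fix), and the point-mass variant is legitimate here only because $\mathbb{E}[\psi_\theta(Y,X)\,|\,A=a,X=x]=\int\psi_\theta(y,x)f_\theta(y\,|\,x,a)\,dy$ is defined pointwise through the parametric density, a caveat you implicitly acknowledge but is worth making explicit.
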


The equivalence between the three parts is again easy to see (see Appendix \ref{app_proofs}), and the usefulness of the result comes from the fact that (\ref{eq_FD_marg}) is a linear functional equation.  In this case as well, the linear operator in (\ref{eq_FD_marg}) is known given $\theta$. Proposition \ref{prop_FD_marg} shows that the functional differencing approach initially applied in \citet{bonhomme2012functional} to obtain restrictions on model parameters in panel data models can be applied to derive restrictions on average effects in other models with latent variables, including network settings.

Finding a moment representation for $\mu$ amounts to solving a linear functional system, which is a Fredholm integral equation of the first kind. Numerical and analytical methods can be used to construct $\psi$ functions that satisfy (\ref{eq_FD_marg}). In dynamic panel logit models,  \citet{aguirregabiria2021identification}, \citet{dobronyi2021identification} and \citet{dano2023transition} show that the computation of average marginal effects is analytically straightforward. However, in models with continuous outcomes the inverse problem in (\ref{eq_FD_marg}) is generally ill-posed (\citealp{carrasco2007linear}, \citealp{engl1996regularization}). Given a solution $\psi$ to the functional system, using it for estimation in a finite sample thus typically requires regularization. In this paper we focus on finding moment functions $\phi$ and $\psi$, and leave a detailed study of estimators and their properties to future work. See Section \ref{sec_remarks} for further discussion.

%Least squares interpretation. 

\section{Linear network models\label{sec_ex_lin}}

In this section and the next three we illustrate Propositions \ref{propo_FD} and \ref{prop_FD_marg} through various examples. Consider first a linear model with an $X$ matrix that consists of two parts, $X=(X_1,X_2)$, where $X_1$ is a matrix of network links and $X_2$ is a matrix of covariates. An example is the AKM model of \citet{abowd1999high}, given by an augmented version of (\ref{eq_AKM}) that includes covariates. We specify
\begin{equation}Y=X_1A+X_2\beta+\varepsilon,\quad (\varepsilon\,|\, X,A)\sim iid{\cal{N}}(0,\sigma^2 I_n),\label{eq_mod_lin}\end{equation}
where $n$ is the number of observations, $I_n$ denotes the $n\times n$ identity matrix, and we denote $\theta=(\beta',\sigma^2)'$.

In this model we will focus on the parameters $\beta$ and $\sigma^2$, and on quadratic forms $\mu=\mathbb{E}[A'QA]$ for a symmetric matrix $Q$. Variance components, which can be written as quadratic forms, are of interest for, e.g., decomposing the variance of log wages into components reflecting worker heterogeneity, firm heterogeneity, and sorting patterns between heterogeneous workers and firms (\citealp{abowd1999high}, \citealp{card2013workplace}, \citealp{song2019firming}). 

\subsection{Parameters $\beta$ and $\sigma^2$}

In this subsection we derive moment restrictions on $\beta$ and $\sigma^2$. For this purpose we rely on Proposition \ref{propo_FD}. We start by noting that (\ref{eq_FD}) can be equivalently written as
\begin{align*}
&\int \phi_{\theta}(y,x)\exp\left(-\frac{1}{2\sigma^2}(y-x_1a-x_2\beta)'(y-x_1a-x_2\beta)\right)dy=0.\label{eq_lin1}
\end{align*}

It is useful to introduce the Moore-Penrose pseudo-inverse $x_1^{\dagger}$  of $x_1$, and to write the two orthogonal projectors associated with $x_1$ as $x_1x_1^{\dagger}=u_1u_1'$ and $I_n-x_1x_1^{\dagger}=u_2u_2'$, where $u=(u_1,u_2)$ is orthogonal. Let $v=y-x_2\beta$, $v_1=u_1'v$, and $v_2=u_2'v$. 

We then note that (\ref{eq_FD}) can equivalently be written as
\begin{align*}
&\int \phi_{\theta}(y,x)\exp\left(-\frac{1}{2\sigma^2}(y-x_1a-x_2\beta)'x_1x_1^{\dagger}(y-x_1a-x_2\beta)\right)\\
	&\quad\quad\quad\quad\quad\quad\times\exp\left(-\frac{1}{2\sigma^2}(y-x_2\beta)'[I_n-x_1x_1^{\dagger}](y-x_2\beta)\right)dy=0.	\end{align*}
Since $u=(u_1,u_2)$ is orthogonal, we equivalently obtain 
\begin{align*}
&  \int \bigg[\int  \phi_{\theta}(x_2\beta+u_1v_1+u_2v_2,x)\exp\left(-\frac{1}{2\sigma^2}v_2'v_2\right)dv_2\bigg]\\
&\quad\quad\quad\quad\quad\quad\times \exp\left(-\frac{1}{2\sigma^2}(v_1-u_1'x_1a)'(v_1-u_1'x_1a)\right)dv_1=0. 	\end{align*}
In Proposition \ref{propo_FD} we are looking for restrictions holding for all real vectors $a$. Since the rows of $u_1'x_1$ are linearly independent, we equivalently search for the following equation being satisfied for all real vectors $b$:\footnote{$b$ has the same dimension as $v_1$.}
\begin{align*}
	&  \int \bigg[\int  \phi_{\theta}(x_2\beta+u_1v_1+u_2v_2,x)\exp\left(-\frac{1}{2\sigma^2}v_2'v_2\right)dv_2\bigg]\\
	&\quad\quad\quad\quad\quad\quad\times \exp\left(-\frac{1}{2\sigma^2}(v_1-b)'(v_1-b)\right)dv_1=0. 	\end{align*}
This convolution equation has the unique solution:
\begin{align*}
	&  \int \phi_{\theta}(x_2\beta+u_1v_1+u_2v_2,x)\exp\left(-\frac{1}{2\sigma^2}v_2'v_2\right)dv_2=0.	\end{align*}

We have thus shown the following.

\begin{proposition}\label{prop_linear_param}
	In model (\ref{eq_mod_lin}), the following two statements are equivalent:
	
\noindent	(i) $\int \phi_{\theta}(y,x)f_{\theta}(y\,|\, x,a)dy=0$.
	
\noindent	(ii) $\phi_{\theta}(y,x)=\varphi_{\theta}(u_1'(y-x_2\beta),u_2'(y-x_2\beta),x)$, where the function $\varphi_{\theta}$ is such that
	\begin{align*}
		&\int \varphi_{\theta}(v_1,v_2,x)\exp\left(-\frac{1}{2\sigma^2}v_2'v_2\right)dv_2=0.	\end{align*}
	
\end{proposition}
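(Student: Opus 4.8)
\quad The plan is to turn condition (i) into a pure convolution equation in a reduced variable and then invoke injectivity of convolution against a Gaussian kernel to kill everything except the advertised integral constraint. The guiding observation is that $x_1a$ lies in $\mathrm{col}(x_1)$, so once $y-x_2\beta$ is rotated into the orthonormal frame $u=(u_1,u_2)$ the latent vector $a$ enters only the $u_1$-block of the exponent and the $u_2$-block becomes a ``free'' direction along which $\phi_\theta$ can be integrated out.

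\emph{Reduction.} First I would substitute the Gaussian density into (\ref{eq_FD}) and use $I_n=u_1u_1'+u_2u_2'$ together with $u_2'x_1=0$ (which holds because $u_2u_2'=I_n-x_1x_1^{\dagger}$ projects onto the orthogonal complement of $\mathrm{col}(x_1)$, while $x_1a\in\mathrm{col}(x_1)$) to split
$$(y-x_1a-x_2\beta)'(y-x_1a-x_2\beta)=\big\|u_1'(y-x_2\beta)-u_1'x_1a\big\|^2+\big\|u_2'(y-x_2\beta)\big\|^2 .$$
Setting $v_1=u_1'(y-x_2\beta)$ and $v_2=u_2'(y-x_2\beta)$ is an orthogonal change of variables composed with a known shift, so the Jacobian is one and $y=x_2\beta+u_1v_1+u_2v_2$; the integral in (\ref{eq_FD}) then factors as
$$\int\Big[\int\phi_{\theta}(x_2\beta+u_1v_1+u_2v_2,x)\exp\big(-\tfrac{1}{2\sigma^2}v_2'v_2\big)\,dv_2\Big]\exp\big(-\tfrac{1}{2\sigma^2}(v_1-u_1'x_1a)'(v_1-u_1'x_1a)\big)\,dv_1=0 .$$
Writing $\Phi(v_1,x)$ for the bracketed inner integral, (i) says that the Gaussian smoothing of $\Phi(\cdot,x)$ vanishes at the point $u_1'x_1a$.

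\emph{Convolution injectivity.} Next I would note that, as $a$ ranges over all real vectors, $u_1'x_1a$ ranges over all of $\mathbb{R}^{\dim v_1}$: since $x_1x_1^{\dagger}x_1=x_1$ we have $x_1=(u_1u_1')x_1$, so $u_1'x_1$ has full row rank equal to $\mathrm{rank}(x_1)$, the number of columns of $u_1$. Hence (i) is equivalent to $\int\Phi(v_1,x)\exp\big(-\tfrac{1}{2\sigma^2}(v_1-b)'(v_1-b)\big)\,dv_1=0$ for every real $b$. This is the step where the main obstacle sits: I must deduce $\Phi(\cdot,x)\equiv 0$ from the fact that its Gaussian smoothing is identically zero. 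Factoring $\exp\big(-\tfrac{1}{2\sigma^2}(v_1-b)'(v_1-b)\big)=\exp\big(-\tfrac{1}{2\sigma^2}v_1'v_1\big)\exp\big(\tfrac{1}{\sigma^2}b'v_1\big)\exp\big(-\tfrac{1}{2\sigma^2}b'b\big)$ reduces this to the statement that the (two-sided, multivariate) Laplace transform of $v_1\mapsto\Phi(v_1,x)\exp\big(-\tfrac{1}{2\sigma^2}v_1'v_1\big)$ vanishes for all $b$; the boundedness hypothesis inherited from Proposition \ref{propo_FD} is what makes the relevant integrals converge, and uniqueness of the Laplace transform then forces $\Phi(\cdot,x)=0$ almost everywhere (equivalently, pass to Fourier transforms and use that a Gaussian characteristic function never vanishes).

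\emph{Conclusion and converse.} Finally, the form $\phi_{\theta}(y,x)=\varphi_{\theta}(u_1'(y-x_2\beta),u_2'(y-x_2\beta),x)$ is not itself a restriction, since for each $x$ the map $y\mapsto(u_1'(y-x_2\beta),u_2'(y-x_2\beta))$ is a bijection; one simply sets $\varphi_{\theta}(v_1,v_2,x)=\phi_{\theta}(x_2\beta+u_1v_1+u_2v_2,x)$, and with this identification the conclusion ``$\Phi(\cdot,x)\equiv 0$'' is exactly the stated integral condition on $\varphi_{\theta}$, giving (i)$\Rightarrow$(ii). For the converse, if (ii) holds the inner $v_2$-integral above is identically zero, so the displayed double integral vanishes for every $a$, which is (i). Beyond the convolution-injectivity argument, the only care required is the bookkeeping when $x_1$ is rank-deficient — precisely the case relevant for AKM-type designs — and this is what the Moore-Penrose pseudo-inverse handles.
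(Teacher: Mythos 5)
Your proof is correct and follows essentially the same route as the paper: the same orthogonal split via $u_1u_1'=x_1x_1^{\dagger}$ and $u_2u_2'=I_n-x_1x_1^{\dagger}$, the same change of variables to $(v_1,v_2)$, the observation that $u_1'x_1a$ sweeps out all of $\mathbb{R}^{\dim v_1}$, and the uniqueness of the solution to the resulting Gaussian convolution equation. The only difference is that you spell out the injectivity step (via the Laplace/Fourier transform and the non-vanishing Gaussian characteristic function) and the easy converse, which the paper states without detail.
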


By Proposition \ref{propo_FD}, Proposition \ref{prop_linear_param} characterizes all the available moment restrictions on the parameter $\theta=(\beta',\sigma^2)$ in model (\ref{eq_mod_lin}). Now, there are many possible choices for $\varphi_{\theta}$, leading to many choices of moment functions in this model. 

As a first example, let us take
$$\varphi_{\theta}(v_1,v_2,x)=u_2v_2.$$ 
We obtain
\begin{eqnarray*}
\phi_{\theta}(y,x)&=&\varphi_{\theta}(u_1'(y-x_2\beta),u_2'(y-x_2\beta),x)\notag\\
&=&u_2u_2'(y-x_2\beta)\\
&=&(I_n-x_1x_1^{\dagger})(y-x_2\beta).
\end{eqnarray*}
This implies the following conditional moment restrictions on $\beta$:
\begin{equation}
\mathbb{E}\left[(I_n-X_1X_1^{\dagger})(Y-X_2\beta)\,|\, X_1,X_2\right]=0.\label{eq_restr_beta}
\end{equation}

In a panel data setting, \citet{chamberlain1992efficiency} shows that the efficiency bound for $\beta$ based on the quasi-differencing restrictions (\ref{eq_restr_beta}) coincides with the bound based on a semiparametric model where $\varepsilon$ has mean zero but is otherwise not restricted. In our setup, (\ref{eq_restr_beta}) remains valid in general non-Gaussian linear network regression models where $\mathbb{E}[\varepsilon\,|\, X]=0$. Moreover, the Gaussian assumption on $\varepsilon$ provides additional restrictions that are fully characterized by Proposition \ref{prop_linear_param}. Some of those additional restrictions arise from the variance matrix of outcomes.

As a second example, let $n_2=\dim v_2=\mbox{Trace}(I_n-x_1x_1^{\dagger})$, and take $$\varphi_{\theta}(v_1,v_2,x)=v_2'v_2-n_2\sigma^2 .$$ 
We obtain $$\phi_{\theta}(y,x)=(y-x_2\beta)'[I_n-x_1x_1^{\dagger}](y-x_2\beta)-n_2\sigma^2.$$
This implies the following conditional moment restrictions on $\beta$ and $\sigma^2$:
\begin{equation}
	\mathbb{E}\left[(Y-X_2\beta)'[I_n-X_1X_1^{\dagger}](Y-X_2\beta)-n_2\sigma^2\,|\, X_1,X_2\right]=0.\label{eq_restr_sigma2}
\end{equation}
Note that (\ref{eq_restr_sigma2}) remains valid under non-Gaussianity, provided $\mathbb{E}[\varepsilon\varepsilon'\,|\, X]=\sigma^2 I_n$. Restrictions akin to (\ref{eq_restr_sigma2}) were considered in \citet{arellano2012identifying} in a panel data setting. In a network context, \citet{andrews2008high} derived an unconditional version of (\ref{eq_restr_sigma2}), and applied it to the decomposition of the variance of log wages.\footnote{The assumption that the elements of $\varepsilon$ be mutually independent may be empirically restrictive. In applications of AKM, it is common to only rely on between-job-spell variation in log wages in estimation, in order not to restrict the within-job-spell correlation in $\varepsilon$ (\citealp{kline2020leave}, \citealp{bonhomme2020much}).}

\subsection{Quadratic forms}

In this subsection we derive moment restrictions on a quadratic form $\mu=\mathbb{E}[m_{\theta}(A,X)]$, where $m_{\theta}(a,x)=a'Qa$ for an $m\times m$ symmetric matrix $Q$, for $m$ the dimension of $A$. For this purpose we rely on Proposition \ref{prop_FD_marg}. We start by noting that (\ref{eq_FD_marg}) is equivalent to
\begin{align*}
\int \psi_{\theta}(y,x)\frac{1}{(2\pi \sigma^2)^{\frac{n}{2}}}\exp\left(-\frac{1}{2\sigma^2}(y-x_1a-x_2\beta)'(y-x_1a-x_2\beta)\right)dy=a'Qa.
\end{align*}
Using the above reparameterization in terms of $(v_1,v_2)$, we equivalently have 
\begin{align}
&\frac{1}{(2\pi \sigma^2)^{\frac{n}{2}}}\iint \psi_{\theta}(x_2\beta+u_1v_1+u_2v_2,x)\notag\\
	&\quad \times\exp\left(-\frac{1}{2\sigma^2}(v_1-u_1'x_1a)'(v_1-u_1'x_1a)\right)\exp\left(-\frac{1}{2\sigma^2}v_2'v_2\right)dv_1dv_2=a'Qa.\label{eq_prop4_forproof}	\end{align}

We then have the following result, shown in Appendix \ref{app_proofs}.

\begin{proposition}\label{prop_linear_quad}
	Consider model (\ref{eq_mod_lin}), and suppose that $X_1$ has full column rank almost surely. Then the following two statements are equivalent:
	
\noindent	(i) $\int \psi_{\theta}(y,x)f_{\theta}(y\,|\, x,a)dy=a'Qa$.
	
\noindent	(ii) $\frac{1}{(2\pi \sigma^2)^{\frac{n_2}{2}}}\int \psi_{\theta}(x_2\beta+u_1v_1+u_2v_2,x)\exp\left(-\frac{1}{2\sigma^2}v_2'v_2\right)dv_2=v'(x_1^{\dagger})'Qx_1^{\dagger}v-\sigma^2\mbox{Trace}((x_1^{\dagger})'Qx_1^{\dagger})$.
	
\end{proposition}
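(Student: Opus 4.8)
The plan is to reduce statement (i) to a Gaussian deconvolution in the $v_1$-coordinates and then solve it explicitly. Starting from the reparameterized identity (\ref{eq_prop4_forproof}), I would first integrate out $v_2$. Writing $n_1=\mbox{rank}(x_1)$, $n_2=n-n_1$, and
\[
h_{\theta}(v_1,x):=\frac{1}{(2\pi\sigma^2)^{\frac{n_2}{2}}}\int\psi_{\theta}(x_2\beta+u_1v_1+u_2v_2,x)\exp\left(-\frac{1}{2\sigma^2}v_2'v_2\right)dv_2,
\]
which is exactly the left-hand side of (ii), an application of Fubini's theorem together with $(2\pi\sigma^2)^{n/2}=(2\pi\sigma^2)^{n_1/2}(2\pi\sigma^2)^{n_2/2}$ turns (\ref{eq_prop4_forproof}) into
\[
\frac{1}{(2\pi\sigma^2)^{\frac{n_1}{2}}}\int h_{\theta}(v_1,x)\exp\left(-\frac{1}{2\sigma^2}(v_1-u_1'x_1a)'(v_1-u_1'x_1a)\right)dv_1=a'Qa,
\]
i.e.\ $\mathbb{E}[h_{\theta}(V_1,x)]=a'Qa$ where $V_1\sim{\cal N}(u_1'x_1a,\sigma^2 I_{n_1})$.

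Next I would use the full-column-rank hypothesis. When $X_1$ has full column rank $m$ we have $n_1=m$, the columns of $u_1$ form an orthonormal basis of the column space of $x_1$, and the $m\times m$ matrix $u_1'x_1$ is invertible; hence $b=u_1'x_1a$ is a bijection of $\mathbb{R}^m$ and the previous display becomes $\int h_{\theta}(v_1,x){\cal N}(v_1;b,\sigma^2 I_m)\,dv_1=b'\widetilde Q b$ for all $b$, with $\widetilde Q:=((u_1'x_1)^{-1})'Q(u_1'x_1)^{-1}$. Here I would record the pseudo-inverse identities $x_1^{\dagger}x_1=I_m$, $x_1^{\dagger}u_2=0$ and $x_1^{\dagger}u_1=(u_1'x_1)^{-1}$, which give $x_1^{\dagger}v=(u_1'x_1)^{-1}v_1$, hence $v'(x_1^{\dagger})'Qx_1^{\dagger}v=v_1'\widetilde Q v_1$ and $\mbox{Trace}((x_1^{\dagger})'Qx_1^{\dagger})=\mbox{Trace}(\widetilde Q)$. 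Thus the right-hand side of (ii) is precisely the quadratic $q(v_1):=v_1'\widetilde Q v_1-\sigma^2\mbox{Trace}(\widetilde Q)$ written in the original coordinates, and (i)$\Leftrightarrow$(ii) is equivalent to: $\int h_{\theta}(v_1,x){\cal N}(v_1;b,\sigma^2 I_m)dv_1=b'\widetilde Q b$ for all $b$ $\;\Leftrightarrow\;$ $h_{\theta}(\cdot,x)=q$.

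For the ``if'' direction one checks that $q$ solves the deconvolution, using $\mathbb{E}[V_1'\widetilde Q V_1]=b'\widetilde Q b+\sigma^2\mbox{Trace}(\widetilde Q)$ for $V_1\sim{\cal N}(b,\sigma^2 I_m)$; equivalently, one verifies directly that $\psi_{\theta}(y,x)=(x_1^{\dagger}(y-x_2\beta))'Q\,x_1^{\dagger}(y-x_2\beta)-\sigma^2\mbox{Trace}((x_1^{\dagger})'Qx_1^{\dagger})$ satisfies (i), since under full column rank $x_1^{\dagger}(Y-x_2\beta)=a+x_1^{\dagger}\varepsilon\sim{\cal N}(a,\sigma^2 x_1^{\dagger}(x_1^{\dagger})')$. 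For the ``only if'' direction I would subtract this particular solution and invoke completeness of the Gaussian location family $\{{\cal N}(b,\sigma^2 I_m):b\in\mathbb{R}^m\}$: if $g$ satisfies $\int g(v_1){\cal N}(v_1;b,\sigma^2 I_m)dv_1=0$ for all $b$, then $b\mapsto\int g(v_1)\exp(-\|v_1\|^2/(2\sigma^2))\exp(\langle v_1,b\rangle/\sigma^2)dv_1$ is an entire function vanishing on $\mathbb{R}^m$, so the two-sided Laplace transform of $g(\cdot)\exp(-\|\cdot\|^2/(2\sigma^2))$ is identically zero and $g=0$ almost everywhere. Combining the two directions with the reduction above gives the claimed equivalence.

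The step I expect to be the main obstacle is making this last uniqueness argument watertight: injectivity of Gaussian convolution only holds on function classes whose growth is slower than $\exp(c\|v_1\|^2)$ (polynomial growth being the relevant case), so one must argue that every admissible $\psi_{\theta}$ — one for which the integrals are well-defined and $\mathbb{E}[\psi_{\theta}(Y,X)\,|\,A,X]-m_{\theta}(A,X)$ is bounded — produces an $h_{\theta}(\cdot,x)$ in such a class, or else state the result relative to that class. Everything else (Fubini, the linear change of variables, the Gaussian second-moment computation, and the pseudo-inverse bookkeeping) is routine.
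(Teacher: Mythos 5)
Your proposal is correct, and it coincides with the paper's proof up to the key reduction: like the paper, you integrate out $v_2$, set $b=u_1'x_1a$, and use the full-column-rank assumption (so that $u_1'x_1$ is invertible and $b$ ranges over all of $\mathbb{R}^m$) to turn (\ref{eq_prop4_forproof}) into a Gaussian deconvolution problem in $v_1$; your pseudo-inverse bookkeeping identifying the right-hand side of (ii) with $v_1'\widetilde{Q}v_1-\sigma^2\mbox{Trace}(\widetilde{Q})$ is also exactly what the paper uses at the end. Where you genuinely differ is in how the deconvolution is solved. The paper is constructive: it computes the distributional Fourier transform of $b\mapsto b'Cb$ (a combination of $\delta''$ and $\delta'\delta'$ terms), divides by the Gaussian characteristic function, and inverts by integration by parts against the delta derivatives, which yields $\varphi_{\theta}(v_1,x)=v_1'Cv_1-\sigma^2\mbox{Trace}(C)$ without having to guess the answer. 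You instead posit the quadratic candidate (equivalently, verify the trace-corrected function in (\ref{eq_psi_lin}) via the Gaussian second-moment identity) and then prove uniqueness by completeness of the Gaussian location family, i.e., injectivity of Gaussian convolution through an entire-Laplace-transform argument. Both routes rest on the same underlying fact — the Gaussian characteristic function never vanishes — so yours trades the distribution-theoretic computation for an elementary moment check plus a standard completeness lemma, at the cost of needing the solution in advance. Your closing caveat about restricting to a growth class on which Gaussian convolution is injective is well taken, and it applies equally to the paper's own argument, which applies the Fourier transform to $\varphi_{\theta}$ and invokes Fourier inversion without specifying a function class; neither treatment is fully rigorous on that point, but you at least flag it explicitly.
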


By Proposition \ref{prop_FD_marg}, Proposition \ref{prop_linear_quad}
 characterizes all the available moment restrictions on $\mu=\mathbb{E}[A'QA]$.\footnote{When $X_1$ is a network matrix, ensuring the assumption that $X_1$ has full column rank often requires to restrict the sample to a connected subnetwork. See \citet{abowd1999high} and \citet{abowd2002computing} for methods to compute connected subnetworks in settings with workers and firms.} The proof relies on Fourier transforms. A special case of Proposition \ref{prop_linear_quad} is obtained when $\psi_{\theta}(y,x)$ is a function of $u_1'(y-x_2\beta)$ and $x$ only, which implies
\begin{equation}\label{eq_psi_lin}\psi_{\theta}(y,x)=(y-x_2\beta)'(x_1^{\dagger})'Qx_1^{\dagger}(y-x_2\beta)-\sigma^2\mbox{Trace}((x_1^{\dagger})'Qx_1^{\dagger}).\end{equation}
The trace correction in (\ref{eq_psi_lin}) is a well-known formula to obtain unbiased estimators of quadratic forms. \citet{andrews2008high}, and \citet{kline2020leave} in a heteroskedastic context, apply such corrections to estimate variance components in log wage variance decompositions.

Moreover, given the particular solution (\ref{eq_psi_lin}), any other solution is of the form
$$\widetilde{\psi}_{\theta}(y,x)=\psi_{\theta}(y,x)+\varphi_{\theta}(u_1'(y-x_2\beta),u_2'(y-x_2\beta),x),$$
where, as in Proposition \ref{prop_linear_param}, the function $\varphi_{\theta}$ satisfies\begin{align*}
	&\int \varphi_{\theta}(v_1,v_2,x)\exp\left(-\frac{1}{2\sigma^2}v_2'v_2\right)dv_2=0.	\end{align*}

\begin{remark}

Proposition \ref{prop_FD_marg} can be applied to find estimators of other average effects beyond quadratic forms. Consider the quantity $\mu=\mathbb{E}[m_{\theta}(A,X)]$, for some function $m_{\theta}(\cdot)$. Examples are higher-order moments of $A$, its distribution function, or some nonlinear moments of $A$. Using similar arguments to the ones leading to Proposition \ref{prop_linear_quad}, one can derive a formula for all moment restrictions on $\mu$, 
\begin{equation}\mathbb{E}[\psi_{\theta}(Y,X)]=\mu.\label{eq_mom_mu}\end{equation}
However, the expression of $\psi_{\theta}$, which we derive in Appendix \ref{app_linear_nonquad}, depends on the inverse of the Fourier transform operator, and it does not generally admit a closed-form expression when $m_{\theta}(a,x)$ is not polynomial in $a$. In addition, estimating $\mu$ based on (\ref{eq_mom_mu}) generally requires regularizing $\psi_{\theta}$ (as shown in a panel data setting in the working paper version of \citealp{bonhomme2012functional}). 

%\citet{bonhomme2023DAKM} use such an approach to estimate distributions of worker and firm effects using matched employer-employee data. 

\end{remark}

%
%
%\begin{align*}
%	&\int \phi_{\theta}(y,x)f_{\theta}(y\,|\, x,a)dy=0\\
%	&\Leftrightarrow \int \phi_{\theta}(y,x)\exp\left(-\frac{1}{2\sigma^2}(y-x_1a-x_2\beta)'(y-x_1a-x_2\beta)\right)dy=0
%\\
%&\Leftrightarrow \int \widetilde{\phi}_{\theta}(y,x)\exp\left(\frac{1}{\sigma^2}(y-x_2\beta)'x_1a\right)dy=0,	\end{align*}
% where
% $$\widetilde{\phi}_{\theta}(y,x)=\phi_{\theta}(y,x)\exp\left(-\frac{1}{2\sigma^2}(y-x_2\beta)'(y-x_2\beta)\right).$$
% 
% 
%Let $v=y-x_2\beta$. We have
%\begin{align*}
%	&\int \phi_{\theta}(y,x)f_{\theta}(y\,|\, x,a)dy=0\\
%	&\Leftrightarrow \int \widetilde{\phi}_{\theta}(x_2\beta+v,x)\exp\left(\frac{1}{\sigma^2}v'x_1a\right)dv=0.
%	\end{align*}
%
%We need a lemma, showing that this is equivalent to there existing a function $\varphi_{\theta}$ such that
%\begin{align*}
%	&\widetilde{\phi}_{\theta}(x_2\beta+v,x)=\varphi_{\theta}((I-x_1x_1^{\dagger})v,x),
%\end{align*}
%and
%$$\int \varphi_{\theta}(u,x)du=0.$$
%  
%
%Hence
%\begin{align*}&\phi_{\theta}(y,x)=\exp\left(\frac{1}{2\sigma^2}(y-x_1a-x_2\beta)'x_1x_1^{\dagger}(y-x_1a-x_2\beta)\right)\\
%	&\times\exp\left(\frac{1}{2\sigma^2}(y-x_1a-x_2\beta)'(I-x_1x_1^{\dagger})(y-x_1a-x_2\beta)\right)\varphi_{\theta}((I-x_1x_1^{\dagger})(y-x_2\beta),x).\end{align*}

\section{Logit network models\label{sec_ex_logit}}

In this section and the next two we study logit models for network data. Letting $X=(X_1,X_2)$, we assume
\begin{equation}Y=\boldsymbol{1}\left\{X_1A+X_2\theta+\varepsilon>0\right\},\quad \left(\varepsilon\,|\, X,A\right)\sim iidLogistic.\label{eq_mod_logit}\end{equation}

Model (\ref{eq_mod_logit}) contains several popular binary choice models as special cases. A first example is a static panel data logit model, which obtains when the columns of $X_1$ are individual indicators. Another example is the logistic network formation model of \citet{graham2017econometric}, see (\ref{eq_graham}), which obtains when $Y$ are link outcomes and the elements of $X_1 A$ are the individual sums $A_i+A_j$, for $i\neq j$.   

Equation (\ref{eq_mod_logit}) also covers models of binary outcomes on a network. To illustrate, we will consider the following binary choice counterpart to the AKM model of \citet{abowd1999high}:
\begin{equation}\label{eq_AKM_logit}
	Y_{it}=\boldsymbol{1}\left\{X_{it}'\theta+\alpha_i+\psi_{j(i,t)}+\varepsilon_{it}>0\right\},\quad  (\varepsilon_{it}\,|\, X_{11},...,X_{NT})\sim iidLogistic,	
\end{equation}
where, as in the linear case, $i$ are workers, $t$ are time periods, and $j(i,t)$ denotes the firm where $i$ is employed at $t$. It is easy to see that (\ref{eq_AKM_logit}) can be written in the form (\ref{eq_mod_logit}) for a suitable definition of $X_1$. In this setting, the network is a bipartite multigraph: there may be multiple edges pointing from a worker $i$ to a firm $j$ indicating that worker $i$ was in an employment relationship with firm $j$ over multiple periods.

While, in many applications of AKM, outcomes $Y_{it}$ are log earnings or log wages, it may be of interest to account for worker and firm heterogeneity when studying other labor market outcomes. For example, \citet{lachowska2023work} apply AKM to the analysis of log hours worked. In this setting, the AKM logit specification (\ref{eq_AKM_logit}) could be employed to analyze the determinants of part-time and full-time work using a binary measure of working time. In other applications, one may be interested in applying AKM to study determinants of worker promotions (e.g., \citealp{benson2019promotions}) or of the type of labor contract such as fixed-term or permanent contract (e.g., \citealp{guell2007binding}). The logit specification (\ref{eq_AKM_logit}) can also be useful in applications of AKM to other fields (including education, innovation, urban economics, trade, and empirical finance) where binary outcomes are common.        

In the next section we will first focus on the covariates' coefficients $\theta$ in model (\ref{eq_mod_logit}). For example, \citet{margolis1996cohort} studies the earnings returns to seniority in France while accounting for worker and firm heterogeneity. A binary specification such as (\ref{eq_AKM_logit}) allows one to document the effects of seniority on binary labor market outcomes, such as working part-time or full-time, being awarded a promotion, or working under a permanent or temporary contract. In Section \ref{sec_logit_AME} we will study average effects, which are functions of worker and firm heterogeneity. We will see that deriving non-trivial moment restrictions on average effects seems more challenging than obtaining informative restrictions about the $\theta$ parameter.

\section{Model parameters in logit network models\label{sec_ex_logitAKM}}

In this section we first provide a characterization of all moment restrictions available on $\theta$ in model (\ref{eq_mod_logit}), and then discuss several examples.

\subsection{Characterization}

We start by noting that, in model (\ref{eq_mod_logit}), (\ref{eq_FD}) can be equivalently written as 
\begin{align*}
	\sum_{y\in\{0,1\}^n}\phi_{\theta}(y,x)\prod_{i=1}^n\left(\frac{\exp(x_{i1}'a+x_{i2}'\theta)}{\exp(x_{i1}'a+x_{i2}'\theta)+1}\right)^{y_i} \left(\frac{1}{\exp(x_{i1}'a+x_{i2}'\theta)+1}\right)^{1-y_i}=0,
\end{align*}	
where we have denoted as $y_i$ the $i$th element of $y$, $x_{i1}'$ the $i$th row of $x_1$, and $x_{i2}'$ the $i$th row of $x_2$, for $i\in\{1,...,n\}$.

Equivalently, we have
\begin{align*}	
	& \sum_{y\in\{0,1\}^n}\phi_{\theta}(y,x)\prod_{i=1}^n\exp(y_ix_{i1}'a+y_ix_{i2}'\theta)=0,
\end{align*}	
that is, 
\begin{align*}	
	&\sum_{y\in\{0,1\}^n}\phi_{\theta}(y,x)\exp\left(y'x_2\theta\right)\exp\left(y'x_1a\right)=0.
\end{align*}

Letting, for all $x_1$, ${\cal{S}}_{x_1}=\{x_1'y\, : \, y\in\{0,1\}^n\}$ denote the set of possible values of $x_1'y$, this implies that (\ref{eq_FD}) can be equivalently written as 
\begin{align*}	
	& \sum_{s\in{\cal{S}}_{x_1}}\sum_{y\in\{0,1\}^n}\boldsymbol{1}\left\{x_1'y=s\right\}\phi_{\theta}(y,x)\exp\left(y'x_2\theta\right)\exp\left(s'a\right)=0.\end{align*}

Finally, since $\exp(s'a)$, for $s\in{\cal{S}}_{x_1}$, are linearly independent functions of $a$, we obtain the following characterization.

%Hence
%\begin{align*}
%	&\int \phi_{\theta}(y,x)f_{\theta}(y\,|\, x,a)dy=0\\
%	&\Leftrightarrow \sum_{y\in\{0,1\}^n}\boldsymbol{1}\left\{x_1'y=s\right\}\phi_{\theta}(y,x)\exp\left(y'x_2\theta\right)=0,\quad \forall s\in{\cal{S}}_x. \end{align*}
%

\begin{proposition}\label{prop_logit}
	In model (\ref{eq_mod_logit}), the following two statements are equivalent:
	
	\noindent	(i) $\int \phi_{\theta}(y,x)f_{\theta}(y\,|\, x,a)dy=0$.
	
	\noindent	(ii) $\sum_{y\in\{0,1\}^n}\boldsymbol{1}\left\{x_1'y=s\right\}\phi_{\theta}(y,x)\exp\left(y'x_2\theta\right)=0$, for all $s\in{\cal{S}}_{x_1}$.
\end{proposition}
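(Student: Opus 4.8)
The plan is to verify the chain of equivalences that the excerpt has already set up, making explicit the two small algebraic manipulations and the one linear-independence argument that do the real work. First I would start from statement (i), $\int \phi_{\theta}(y,x)f_{\theta}(y\mid x,a)\,dy=0$, which in the logit model is a finite sum over $y\in\{0,1\}^n$; I would write out $f_{\theta}(y\mid x,a)=\prod_{i=1}^n \big(\Lambda(x_{i1}'a+x_{i2}'\theta)\big)^{y_i}\big(1-\Lambda(x_{i1}'a+x_{i2}'\theta)\big)^{1-y_i}$ with $\Lambda$ the logistic cdf, and observe that the common factor $\prod_i (1+\exp(x_{i1}'a+x_{i2}'\theta))^{-1}$ is strictly positive and independent of $y$, so it can be divided out. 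This reduces (i) to $\sum_{y}\phi_{\theta}(y,x)\prod_i \exp(y_i(x_{i1}'a+x_{i2}'\theta))=0$, which regroups as $\sum_{y}\phi_{\theta}(y,x)\exp(y'x_2\theta)\exp(y'x_1a)=0$.

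Next I would partition the sum over $y$ according to the value of $s=x_1'y$: since $y'x_1a=(x_1'y)'a=s'a$, the left-hand side becomes $\sum_{s\in\mathcal{S}_{x_1}}\Big(\sum_{y}\mathbf{1}\{x_1'y=s\}\,\phi_{\theta}(y,x)\exp(y'x_2\theta)\Big)\exp(s'a)$. Define $c_s(x)=\sum_{y}\mathbf{1}\{x_1'y=s\}\phi_{\theta}(y,x)\exp(y'x_2\theta)$; then (i) (after dividing out the positive factor, which is an equivalence) says precisely that $\sum_{s\in\mathcal{S}_{x_1}} c_s(x)\exp(s'a)=0$ for almost all $a$.

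The key step is then the claim that the functions $a\mapsto \exp(s'a)$, indexed by the finitely many distinct $s\in\mathcal{S}_{x_1}$, are linearly independent over $\mathbb{R}^{\dim a}$. I would prove this by a standard argument: distinct linear exponentials are linearly independent, e.g.\ by restricting to a generic line $a=tv$ (choosing $v$ so that the scalars $s'v$ are all distinct, possible since there are finitely many $s$), reducing to linear independence of $t\mapsto e^{(s'v)t}$ with distinct exponents, which follows from the nonvanishing Vandermonde determinant of their derivatives at $t=0$ (or from analyticity). Hence $\sum_s c_s(x)\exp(s'a)\equiv 0$ forces $c_s(x)=0$ for every $s\in\mathcal{S}_{x_1}$, which is exactly statement (ii). Running the implications in reverse is immediate, since $c_s(x)=0$ for all $s$ trivially yields $\sum_s c_s(x)\exp(s'a)=0$, hence (i). The main obstacle — though a routine one — is being careful that $\mathcal{S}_{x_1}$ is treated as a set of distinct values (so the coefficients $c_s$ are well-defined by aggregating all $y$ with the same $x_1'y$), and that the linear-independence conclusion is applied to the collapsed, duplicate-free family $\{\exp(s'\cdot):s\in\mathcal{S}_{x_1}\}$; everything else is bookkeeping.
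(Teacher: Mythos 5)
Your proposal is correct and follows essentially the same route as the paper: write out the logit likelihood, divide out the $y$-independent positive factor, regroup the sum by the value of $s=x_1'y$, and invoke linear independence of the functions $a\mapsto\exp(s'a)$ over the distinct $s\in{\cal{S}}_{x_1}$. The only difference is that you spell out the standard proof of that linear independence (restriction to a generic line plus a Vandermonde/analyticity argument), which the paper simply asserts.
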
 

Proposition \ref{prop_logit} provides an exhaustive characterization of available moment restrictions in the logit model (\ref{eq_mod_logit}). For a non-zero $\phi$ to exist, it is necessary that, for some $s\in{\cal{S}}_{x_1}$, $y'x_2$ varies given that $x_1'y=s$. In this model, $S=X_1'Y$ is sufficient for $A$. Below we will illustrate Proposition \ref{prop_logit} using several examples: the static panel data logit model, the logistic network formation model, and the AKM logit model. 

%Before turning to those specific examples, note that there may exist many moment functions $\phi$ in model (\ref{eq_mod_logit}). For this reason, it may be useful to consider specific $\phi$ functions that are supported on a subset of the network. We have the following immediate corollary to Proposition \ref{prop_logit}.
%
%\begin{corollary}\label{coro_logit}
%	Suppose that, for some ${\cal{Y}}\subset \{0,1\}^n$, we have
%	$$\sum_{y\in{\cal{Y}}}\boldsymbol{1}\left\{x_1'y=s\right\}\phi_{\theta}(y,x)\exp\left(y'x_2\theta\right)=0,\quad \text{for all }s\in{\cal{S}}_{x_1},$$ and $\phi_{\theta}(y,x)=0 \text{ when }y\notin {\cal{Y}}$. Then $\int \phi_{\theta}(y,x)f_{\theta}(y\,|\, x,a)dy=0$.
%\end{corollary}
%

%[Can characterize average effects too]

%Imposing that $$\sum_{y\in\{0,1\}^n}\phi_{\theta}(y,x)=0,$$
%we require
%\begin{align*}
%&\sum_{y\in\{0,1\}^n}\phi_{\theta}(y,x)y'(x_{1}a+x_{2}\theta)=0.
%\end{align*}
%
%This requires the following conditions:
%\begin{align*}
%	&\sum_{y\in\{0,1\}^n}\phi_{\theta}(y,x)=0,\quad \sum_{y\in\{0,1\}^n}\phi_{\theta}(y,x)y'x_{1}=0,\quad \sum_{y\in\{0,1\}^n}\phi_{\theta}(y,x)y'x_{2}\theta=0.
%\end{align*}

\subsection{Panel data: conditional logit} 

Consider first the panel data model
\begin{equation} \label{dgp_static_logit}
	Y_{it}=\boldsymbol{1}\left\{A_i+X_{it}'\theta+\varepsilon_{it}>0\right\}, \quad (\varepsilon_{it}\,|\, A,X)\sim iidLogistic,\quad i=1,...,N,\,\,t=1,...,T.
\end{equation}
Let $i\in\{1,...,N\}$, $y_i=(y_{i1},...,y_{iT})'$ and $x_{i}=(x_{i1}',...,x_{iT}')'$. For simplicity we search for functions $\phi_{\theta}(y_i,x_{i})$ that only depend on $(y,x)$ through $(y_i,x_i)$. We thus look for $\phi_{\theta}(y_i,x_{i})$ such that
\begin{equation}
	\sum_{y_i\in\{0,1\}^T}\boldsymbol{1}\left\{\sum_{t=1}^Ty_{it}=s\right\}\phi_{\theta}(y_i,x_{i})\exp\left(\sum_{t=1}^Ty_{it}x_{it}'\theta\right)=0,\quad s=0,1,...,T.\label{eq_phi_panel}
\end{equation}

Consider the $T=2$ case, and take $s=1$. We obtain
$$\phi_{\theta}(1,0,x_i)\exp\left(x_{i1}'\theta\right)+\phi_{\theta}(0,1,x_{i})\exp\left(x_{i2}'\theta\right)=0,$$
which coincides with the moment function of conditional logit (\citealp{rasch1960studies}, \citealp{andersen1970asymptotic}). When $T>2$ we recover additional moment restrictions, as in \citet{davezies2020fixed}.

%
%
%Look for $\phi_{\theta}(y,x)=\sum_{i=1}^N\phi_{\theta}(y_{i},x_{i})$. Then
% \begin{align*}
% 	&\int \phi_{\theta}(y,x)f_{\theta}(y\,|\, x,a)dy=0\\
% 	&\Leftrightarrow \sum_{y_i\in\{0,1\}^T}\phi_{\theta}(y_i,x_i)\exp\left(\sum_{t=1}^Ty_{it}x_{2it}'\theta\right)\exp\left(\sum_{t=1}^Ty_{it}a\right)=0\\
% 	 	&\Leftrightarrow \sum_{s=0^T}\sum_{y_i\in\{0,1\}^T}\boldsymbol{1}\left\{\sum_{t=1}^Ty_{it}=s\right\}\phi_{\theta}(y_i,x_i)\exp\left(\sum_{t=1}^Ty_{it}x_{2it}'\theta\right)\exp\left(sa\right)=0\\
% 		&\Leftrightarrow \sum_{y_i\in\{0,1\}^T}\boldsymbol{1}\left\{\sum_{t=1}^Ty_{it}=s\right\}\phi_{\theta}(y_i,x_i)\exp\left(\sum_{t=1}^Ty_{it}x_{2it}'\theta\right)=0,\quad \forall s\in\{0,1,...,T\}.
% \end{align*}

\subsection{Network formation: tetrad logit}

Consider next the logistic network formation model (\ref{eq_graham}) introduced in \citet{graham2017econometric}. Links are undirected,\footnote{Logit models with directed links (\citealp{charbonneau2017multiple}) have a similar structure.} and there are $n=N(N-1)/2$ observations (one for each  dyad), where $N$ is the number of agents. In this model,  the sufficient statistic $s=x_1'y$ in Proposition \ref{prop_logit} is the vector of degrees, i.e., the degree sequence of the network.

For simplicity we focus on functions of tetrads formed by four agents $(i,j,k,\ell)$. We thus look for $\phi_{\theta}(y_{ij},y_{ik},y_{i\ell},y_{jk},y_{j\ell},y_{k\ell},x)$
that satisfy
\begin{align*}
	& \sum_{{\tiny\begin{array}{c}y_{ij},y_{ik},y_{i\ell}\\y_{jk},y_{j\ell},y_{k\ell}\end{array}}}\boldsymbol{1}\left\{y_{ij}+y_{ik}+y_{i\ell}=s_1,y_{ij}+y_{jk}+y_{j\ell}=s_2,y_{jk}+y_{ik}+y_{k\ell}=s_3,y_{i\ell}+y_{j\ell}+y_{k\ell}=s_4\right\}\\
	&\times\phi_{\theta}(y_{ij},y_{ik},y_{i\ell},y_{jk},y_{j\ell},y_{k\ell},x)\exp\left(\left(y_{ij}x_{ij}+y_{ik}x_{ik}+y_{i\ell}x_{i\ell}+y_{jk}x_{jk}+y_{j\ell}x_{j\ell}+y_{k\ell}x_{k\ell}\right)'\theta\right)=0. \end{align*}

Taking first $s_1=s_2=s_3=s_4=1$, we obtain  
\begin{align}
	& \phi_{\theta}(1,0,0,0,0,1,x)\exp\left(\left(x_{ij}+x_{k\ell}\right)'\theta\right)+\phi_{\theta}(0,1,0,0,1,0,x)\exp\left(\left(x_{ik}+x_{j\ell}\right)'\theta\right)\notag\\
	&\quad\quad \quad+\phi_{\theta}(0,0,1,1,0,0,x)\exp\left(\left(x_{i\ell}+x_{jk}\right)'\theta\right)=0.\label{eq_homoph1} \end{align}
Considering next $s_1=s_2=s_3=s_4=2$,  we obtain  
\begin{align}
	& \phi_{\theta}(1,1,0,0,1,1,x)\exp\left(\left(x_{ij}+x_{ik}+x_{j\ell}+x_{k\ell}\right)'\theta\right)\notag\\
	&\quad+\phi_{\theta}(1,0,1,1,0,1,x)\exp\left(\left(x_{ij}+x_{i\ell}+x_{jk}+x_{k\ell}\right)'\theta\right)\notag\\
	&\quad\quad+\phi_{\theta}(0,1,1,1,1,0,x)\exp\left(\left(x_{ik}+x_{i\ell}+x_{jk}+x_{j\ell}\right)'\theta\right)=0.\label{eq_homoph2} \end{align}
Finally, for $s_1=s_2=2,s_3=s_4=1$, we obtain
\begin{align}
	& \phi_{\theta}(1,1,0,0,1,0,x)\exp\left(\left(x_{ij}+x_{ik}+x_{j\ell}\right)'\theta\right)\notag\\
	&\quad\quad \quad+\phi_{\theta}(1,0,1,1,0,0,x)\exp\left(\left(x_{ij}+x_{i\ell}+x_{jk}\right)'\theta\right)=0,\label{eq_homoph3} \end{align}
and there will be analogous restrictions associated with permutations of the degree sequence (2,2,1,1). All other possible degree sequences have no identifying content for $\theta$. 

Together, taking $\phi$ as in (\ref{eq_homoph1}), (\ref{eq_homoph2}), and (\ref{eq_homoph3}) (alongside its permutations), implies the moment restrictions underpinning the ``tetrad logit'' estimator in \citet{graham2017econometric}. However, Proposition \ref{prop_logit} clarifies that these restrictions may not be unique, and it provides all available moment restrictions in the logistic network formation model. 

\subsection{Binary choice on a network: AKM logit}

In this subsection we derive moment restrictions on $\theta$ in model (\ref{eq_AKM_logit}). We focus on the case where $X_{it}$ does not vary within job spells. For example, when controlling for the worker's age, job seniority only varies between spells.\footnote{If $X_{it}$ does vary within spells, then the conditional logit estimator can be used for consistent estimation of $\theta$.} We focus the analysis on the $T=2$ case, and we consider several subnetwork configurations of the data displayed in Figure \ref{Fig_bipart2}. Given a subnetwork configuration, we verify if moment conditions on $\theta$ exist, and what form they take. 

%As an example, let $Y_{it}$ be a promotion indicator. In this case, the sufficient statistic $s=x_1'y$ in Proposition \ref{prop_logit} contains the vector of total promotions received by each individual and the vector of total promotions made by each firm. 

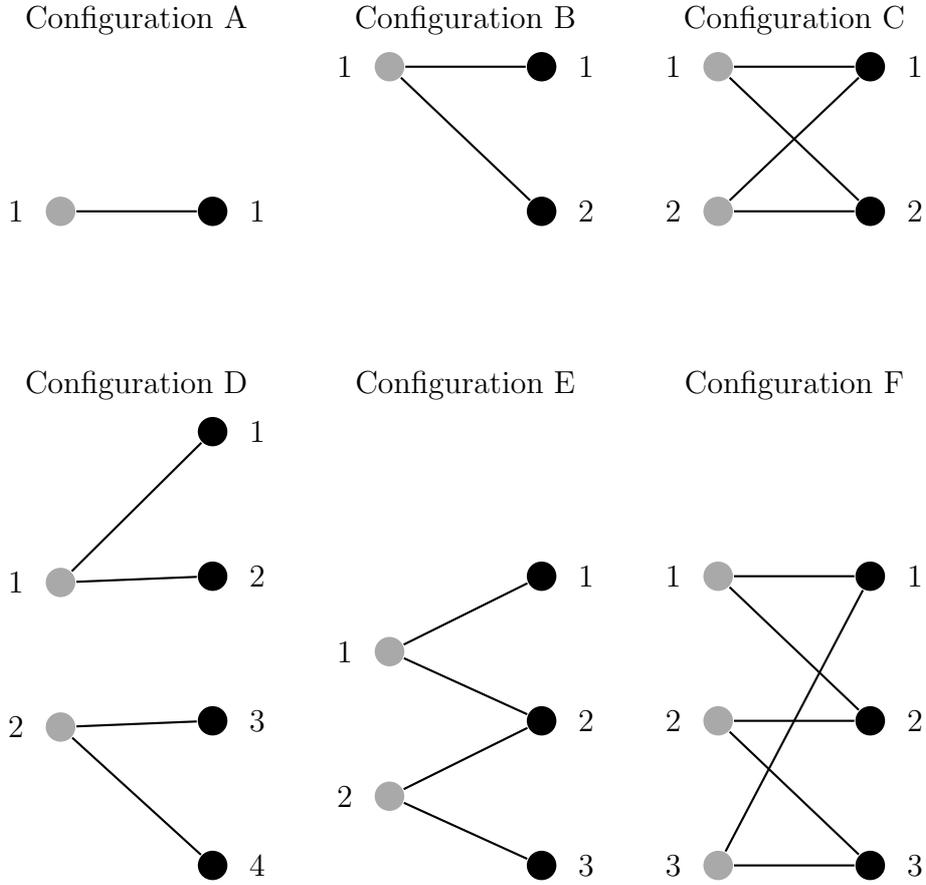
\begin{figure}[tbp]
	
	\definecolor{mygrey}{RGB}{169,169,169}
	\definecolor{myblack}{RGB}{0,0,0}

	\caption{Configurations of worker-firm subnetworks in the logit model with worker and firm heterogeneity\label{Fig_bipart2}}	
	\begin{center}
		\begin{tabular}{ccc}
			Configuration A & Configuration B & Configuration C\\	
			\begin{tikzpicture}[thick,
				every node/.style={circle},
				fsnode/.style={fill=mygrey},
				ssnode/.style={fill=myblack},
				]

				\begin{scope}[xshift=0cm,yshift=0cm,start chain=going below,node distance=15mm]
					\foreach \i in {1}
					\node[fsnode,on chain] (f\i) [label=left: \i] {};
				\end{scope}
				
				\begin{scope}[xshift=2cm,yshift=0cm,start chain=going below,node distance=15mm]
					\foreach \i in {1}
					\node[ssnode,on chain] (s\i) [label=right: \i] {};
				\end{scope}
				
				\draw (f1) -- (s1);
			\end{tikzpicture}
			&\begin{tikzpicture}[thick,
				every node/.style={circle},
				fsnode/.style={fill=mygrey},
				ssnode/.style={fill=myblack},
				]

				\begin{scope}[xshift=0cm,yshift=0cm,start chain=going below,node distance=15mm]
					\foreach \i in {1}
					\node[fsnode,on chain] (f\i) [label=left: \i] {};
				\end{scope}
				
				\begin{scope}[xshift=2cm,yshift=0cm,start chain=going below,node distance=15mm]
					\foreach \i in {1,2}
					\node[ssnode,on chain] (s\i) [label=right: \i] {};
				\end{scope}
				
				\draw (f1) -- (s1);
				\draw (f1) -- (s2);
			\end{tikzpicture}
			&\begin{tikzpicture}[thick,
				every node/.style={circle},
				fsnode/.style={fill=mygrey},
				ssnode/.style={fill=myblack},
				]

				\begin{scope}[xshift=0cm,yshift=0cm,start chain=going below,node distance=15mm]
					\foreach \i in {1,2}
					\node[fsnode,on chain] (f\i) [label=left: \i] {};
				\end{scope}
				
				\begin{scope}[xshift=2cm,yshift=0cm,start chain=going below,node distance=15mm]
					\foreach \i in {1,2}
					\node[ssnode,on chain] (s\i) [label=right: \i] {};
				\end{scope}
				
				\draw (f1) -- (s1);
				\draw (f1) -- (s2);
				\draw (f2) -- (s1);
				\draw (f2) -- (s2);
			\end{tikzpicture}\\
			\multicolumn{3}{c}{}\\
			\multicolumn{3}{c}{}\\
			\multicolumn{3}{c}{}\\
			Configuration D & Configuration E & Configuration F\\	
			\begin{tikzpicture}[thick,
				every node/.style={circle},
				fsnode/.style={fill=mygrey},
				ssnode/.style={fill=myblack},
				]

				\begin{scope}[xshift=0cm,yshift=-2cm,start chain=going below,node distance=15mm]
					\foreach \i in {1,2}
					\node[fsnode,on chain] (f\i) [label=left: \i] {};
				\end{scope}
				
				\begin{scope}[xshift=2cm,yshift=0cm,start chain=going below,node distance=15mm]
					\foreach \i in {1,2,3,4}
					\node[ssnode,on chain] (s\i) [label=right: \i] {};
				\end{scope}
				
				\draw (f1) -- (s1);
				\draw (f1) -- (s2);
				\draw (f2) -- (s3);
				\draw (f2) -- (s4);
			\end{tikzpicture}
			&\begin{tikzpicture}[thick,
				every node/.style={circle},
				fsnode/.style={fill=mygrey},
				ssnode/.style={fill=myblack},
				]

				\begin{scope}[xshift=0cm,yshift=-1cm,start chain=going below,node distance=15mm]
					\foreach \i in {1,2}
					\node[fsnode,on chain] (f\i) [label=left: \i] {};
				\end{scope}
				
				\begin{scope}[xshift=2cm,yshift=0cm,start chain=going below,node distance=15mm]
					\foreach \i in {1,2,3}
					\node[ssnode,on chain] (s\i) [label=right: \i] {};
				\end{scope}
				
				\draw (f1) -- (s1);
				\draw (f1) -- (s2);
				\draw (f2) -- (s2);
				\draw (f2) -- (s3);
			\end{tikzpicture}
			&\begin{tikzpicture}[thick,
				every node/.style={circle},
				fsnode/.style={fill=mygrey},
				ssnode/.style={fill=myblack},
				]

				\begin{scope}[xshift=0cm,yshift=0cm,start chain=going below,node distance=15mm]
					\foreach \i in {1,2,3}
					\node[fsnode,on chain] (f\i) [label=left: \i] {};
				\end{scope}
				
				\begin{scope}[xshift=2cm,yshift=0cm,start chain=going below,node distance=15mm]
					\foreach \i in {1,2,3}
					\node[ssnode,on chain] (s\i) [label=right: \i] {};
				\end{scope}
				
				\draw (f1) -- (s1);
				\draw (f1) -- (s2);
				\draw (f2) -- (s2);
				\draw (f2) -- (s3);
				\draw (f3) -- (s3);
				\draw (f3) -- (s1);
		\end{tikzpicture}\end{tabular}
	\end{center}
	{\footnotesize\textit{Notes: Worker nodes are in the left columns, indicated in grey. Firm nodes are in the right columns, indicated in black. In panels C and F we derive non-trivial moment restrictions on the parameter $\theta$ in model (\ref{eq_AKM_logit}), while no such restrictions exist in panels A,B,D,E.}}
\end{figure}

\paragraph{Configuration A: one worker in the same firm.}

Suppose worker $i$ stays in the same firm $j$ in both periods. We look for $\phi_{\theta}(y_{it},y_{i,t+1},x)$ such that, for $s\in\{0,1,2\}$,
\begin{align*}
	& \sum_{y_{it},y_{i,t+1}}\boldsymbol{1}\left\{y_{it}+y_{i,t+1}=s\right\}\phi_{\theta}(y_{it},y_{i,t+1},x)\exp\left((y_{it}x_{2it}+y_{i,t+1}x_{2i,t+1})'\theta\right)=0. \end{align*}
Given that $x_{2it}=x_{2i,t+1}$ (since the covariate does not vary within spell), this implies
\begin{align*}
	& \left(\sum_{y_{it},y_{i,t+1}}\boldsymbol{1}\left\{y_{it}+y_{i,t+1}=s\right\}\phi_{\theta}(y_{it},y_{i,t+1},x)\right)\exp\left(sx_{2it}'\theta\right)=0. \end{align*}
Hence $\theta$ drops out from the equation, and there is no information to estimate $\theta$ in this configuration.

\paragraph{Configuration B: one worker moving between two firms.}

Suppose worker $i$ moves between firms $j$ and $j'$. We look for $\phi_{\theta}(y_{it},y_{i,t+1},x)$ such that
\begin{align*}
	& \sum_{y_{it},y_{i,t+1}}\boldsymbol{1}\left\{y_{it}+y_{i,t+1}=s_1,y_{it}=s_2\right\}\phi_{\theta}(y_{it},y_{i,t+1},x)\exp\left((y_{it}x_{2it}+y_{i,t+1}x_{2i,t+1})'\theta\right)=0. \end{align*}
However, in this case,  $(s_1,s_2)$ fully determines $(y_{it},y_{i,t+1})$. Hence, for each $(s_1,s_2)$ we obtain $\phi_{\theta}(y_{it},y_{i,t+1},x)=0$, which shows there is no information about $\theta$ in this configuration.

\paragraph{Configuration C: two workers moving between the same two firms.}
Suppose workers $i$ and $i'$ both move between the same firms $j$ and $j'$. We look for a function $\phi_{\theta}(y_{it},y_{i,t+1},y_{i't},y_{i',t+1},x)$ such that \begin{align*}
	& \sum_{y_{it},y_{i,t+1},y_{i't},y_{i',t+1}}\boldsymbol{1}\left\{y_{it}+y_{i,t+1}=s_1,y_{i't}+y_{i',t+1}=s_2,y_{it}+y_{i't}=s_3,y_{i,t+1}+y_{i',t+1}=s_4\right\}\\
	&\quad \times\phi_{\theta}(y_{it},y_{i,t+1},y_{i't},y_{i',t+1},x)\exp\left((y_{it}x_{2it}+y_{i,t+1}x_{2i,t+1}+y_{i't}x_{2i't}+y_{i',t+1}x_{2i',t+1})'\theta\right)=0. \end{align*}

It turns out that, in this subnetwork configuration, there exist non-trivial moment restrictions on $\theta$. To see this, take $s_1=s_2=s_3=s_4=1$. We obtain
\begin{align*}
	& \phi_{\theta}(1,0,0,1,x)\exp\left((x_{2it}+x_{2i',t+1})'\theta\right)+\phi_{\theta}(0,1,1,0,x)\exp\left((x_{2i,t+1}+x_{2i't})'\theta\right)=0. \end{align*}
This implies the conditional moment restriction
\begin{align*}&\mathbb{E}\bigg[Y_{it}(1-Y_{i,t+1})(1-Y_{i't})Y_{i',t+1}\exp\left((X_{2i,t+1}+X_{2i't})'\theta\right)\\
	&\quad\quad \quad -(1-Y_{it})Y_{i,t+1}Y_{i't}(1-Y_{i',t+1})\exp\left((X_{2it}+X_{2i',t+1})'\theta\right)\,\big|\, X\bigg]=0.\end{align*}

%This construction provides an analog of the ``tetrad logit'' estimator of \citet{graham2017econometric} for the logistic network model (\ref{eq_AKM_logit}).

\paragraph{Configuration D: two workers moving between different firms.}

Suppose worker $i$ moves between $j$ and $j'$, and worker $i'$ moves between different firms $j''$ and $j'''$. We look for $\phi_{\theta}(y_{it},y_{i,t+1},y_{i't},y_{i',t+1},x)$ such that
\begin{align*}
	& \sum_{y_{it},y_{i,t+1},y_{i't},y_{i',t+1}}\boldsymbol{1}\left\{y_{it}+y_{i,t+1}{=}s_1,y_{i't}+y_{i',t+1}{=}s_2,y_{it}{=}s_3,y_{i,t+1}{=}s_4,y_{i't}{=}s_5,y_{i',t+1}{=}s_6\right\}\\
	&\quad\quad \times\phi_{\theta}(y_{it},y_{i,t+1},y_{i't},y_{i',t+1},x)\exp\left((y_{it}x_{2it}+y_{i,t+1}x_{2i,t+1}+y_{i't}x_{2i't}+y_{i',t+1}x_{2i',t+1})'\theta\right)=0. \end{align*}
It is easy to see there is no non-trivial $\phi$ function in this case. Intuitively, since workers never share a firm, it is not possible to ``difference out'' the firm component of heterogeneity.

\paragraph{Configuration E: two workers moving to different firms from the same firm.}

Suppose worker $i$ moves between $j$ and $j'$, and worker $i'$ moves from the same firm $j$ to a different firm $j''$. We look for $\phi_{\theta}(y_{it},y_{i,t+1},y_{i't},y_{i',t+1},x)$ such that
\begin{align*}
	& \sum_{y_{it},y_{i,t+1},y_{i't},y_{i',t+1}}\boldsymbol{1}\left\{y_{it}+y_{i,t+1}{=}s_1,y_{i't}+y_{i',t+1}{=}s_2,y_{it}+y_{i't}{=}s_3,y_{i,t+1}{=}s_4,y_{i',t+1}{=}s_5\right\}\\
	&\quad\quad \times\phi_{\theta}(y_{it},y_{i,t+1},y_{i't},y_{i',t+1},x)\exp\left((y_{it}x_{2it}+y_{i,t+1}x_{2i,t+1}+y_{i't}x_{2i't}+y_{i',t+1}x_{2i',t+1})'\theta\right)=0. \end{align*}
It is easy to see there is no information about $\theta$ in this configuration.

\paragraph{Configuration F: three workers in a loop.}

There are many other subnetwork configurations providing information beyond configuration C. Indeed, consider three workers who move as follows: $i$ moves between firms $j$ and $j'$, $i'$ moves between $j'$ and $j''$, and $i''$ moves between $j''$ and $j$. We look for $\phi_{\theta}(y_{it},y_{i,t+1},y_{i't},y_{i',t+1},y_{i''t},y_{i'',t+1},x)$ such that
\begin{align*}
	& \sum_{y_{it},y_{i,t+1},y_{i't},y_{i',t+1},y_{i''t},y_{i'',t+1}}\boldsymbol{1}\bigg\{y_{it}+y_{i,t+1}=s_1,\\
	&\quad y_{i't}+y_{i',t+1}=s_2,y_{i''t}+y_{i'',t+1}=s_3,y_{it}+y_{i'',t+1}=s_4,y_{i't}+y_{i,t+1}=s_5,y_{i''t}+y_{i',t+1}=s_6\bigg\}\\
	&\quad \times\phi_{\theta}(y_{it},y_{i,t+1},y_{i't},y_{i',t+1},y_{i''t},y_{i'',t+1},x)\\
	&\quad\quad \times\exp\left((y_{it}x_{2it}+y_{i,t+1}x_{2i,t+1}+y_{i't}x_{2i't}+y_{i',t+1}x_{2i',t+1}+y_{i''t}x_{2i''t}+y_{i'',t+1}x_{2i'',t+1})'\theta\right)=0. \end{align*}
Taking $s_1=s_2=s_3=s_4=s_5=s_6=1$, one obtains
\begin{align*}
	& \phi_{\theta}(1,0,1,0,1,0,x)\exp\left((x_{2it}+x_{2i't}+x_{2i''t})'\theta\right)\\
	&\quad\quad \quad + \phi_{\theta}(0,1,0,1,0,1,x)\exp\left((x_{2i,t+1}+x_{2i',t+1}+x_{2i'',t+1})'\theta\right)=0.\end{align*}
This implies the conditional moment restriction
\begin{align*}&\mathbb{E}\bigg[Y_{it}(1-Y_{i,t+1})Y_{i't}(1-Y_{i',t+1})Y_{i''t}(1-Y_{i'',t+1})\exp\left((X_{2i,t+1}+X_{2i',t+1}+X_{2i'',t+1})'\theta\right)\\
	&\quad -(1-Y_{it})Y_{i,t+1}(1-Y_{i't})Y_{i',t+1}(1-Y_{i''t})Y_{i'',t+1}\exp\left((X_{2it}+X_{2i't}+X_{2i''t})'\theta\right)\,\big|\, X\bigg]=0.\end{align*}

\section{Average effects in logit network models\label{sec_logit_AME}}

 In this section we again consider model (\ref{eq_mod_logit}), and we study average effects of the form $$\mu=\mathbb{E}[m_{\theta}(A,X)],$$ for some known function $m_{\theta}$. In this case, (\ref{eq_FD_marg}) can be equivalently written as 
  \begin{align}
 	\frac{	\sum_{y\in\{0,1\}^n}\psi_{\theta}(y,x)\prod_{i=1}^n\exp(y_ix_{i1}'a+y_ix_{i2}'\theta)}{\prod_{i=1}^n\left(\exp(x_{i1}'a+x_{i2}'\theta)+1\right)}=m_{\theta}(a,x_{1},x_{2}).\label{eq_marg_char_logit}
 \end{align}	
 This equation characterizes the set of available moment restrictions on $\mu$, i.e., the set of $\psi$ functions such that (\ref{eq_mom_mu}) holds.
 
As a simple example, consider the case where $T=2$ in the static panel logit model (\ref{dgp_static_logit}), with a binary covariate $X_{it}$, and consider 
\begin{align*}m_{\theta}(a,x)&=\Pr(Y_{i1}=1\,|\, X_{i1}=1,A=a,\theta)-\Pr(Y_{i1}=1\,|\, X_{i1}=0,A=a,\theta)\\
	&= \frac{\exp(\theta +a)}{1+\exp(\theta +a)}-\frac{\exp(a)}{1+\exp(a)},\end{align*}
so that $\mu$ is an average partial effect. We show in Appendix \ref{app_logit} that no function $\psi$ satisfies (\ref{eq_marg_char_logit}). Intuitively, this comes from the fact that the distribution of $A$ given $X_{i1}=X_{i2}$ (i.e., for ``stayers'') is unidentified. 

In contrast, as we also show in Appendix \ref{app_logit}, the average partial effect of ``movers'', corresponding to 
\begin{align*}m_{\theta}(a,x)&=\boldsymbol{1}\{x_1\neq x_2\}\left[ \frac{\exp(\theta +a)}{1+\exp(\theta +a)}-\frac{\exp(a)}{1+\exp(a)}\right],\end{align*}
admits a characterization as in (\ref{eq_marg_char_logit}), whenever $\psi$ satisfies
\begin{align}
	&\psi_{\theta}(y_1,y_2,x)=0,\text{ for all }(y_1,y_2),\text{ if }x_1=x_2,\label{eq_psi_logit_1}\\
	&\psi_{\theta}(1,0,x)\exp(\theta x_1)+\psi_{\theta}(0,1,x)\exp(\theta x_2)=\exp(\theta)-1,\text{ if }x_1\neq x_2.\label{eq_psi_logit_2}
\end{align}
A simple example satisfying those conditions is 
$$\psi_{\theta}(y_1,y_2,x)=(x_2-x_1)(y_2-y_1),$$
as pointed out (in a more general nonparametric model) by \citet{chernozhukov2013average}. However, (\ref{eq_psi_logit_1}) and (\ref{eq_psi_logit_2}) imply additional moment restrictions. For example, one can take
\begin{align*}
    \psi_{\theta}(y_1,y_2,x)=\boldsymbol{1}\left\{x_1\neq x_2\right\}\left[y_1(1-y_2)\exp(\theta(1-x_1))-(1-y_1)y_2\exp(-\theta x_2)\right],
\end{align*}
which provides an additional moment restriction on $\mu$ under the logit model's assumptions.

It appears difficult to obtain moment equality restrictions on average partial effects in logit models on networks outside of the panel data case. As an example, consider the subnetwork configuration C in Figure \ref{Fig_bipart2}. In this case we have seen in the previous section how to obtain moment restrictions on $\theta$. However, we show in Appendix \ref{app_logit} that no function $\psi$ satisfies (\ref{eq_marg_char_logit}) for the average partial effect corresponding to 
\begin{align*}m_{\theta}(a,x)&= \frac{\exp(\theta +a_1+a_3)}{1+\exp(\theta +a_1+a_3)}-\frac{\exp(a_1+a_3)}{1+\exp(a_1+a_3)},\end{align*}
 where, in this model, $a_1$ is worker $i$'s fixed effect and $a_3$ is firm $j$'s fixed effect.

In models where no functional differencing restrictions are available, one may still be able to construct bounds on the average effect of interest. In panel data settings, this strategy was pursued by \citet{chernozhukov2013average}, \citet{davezies2021identification}, and \citet{dobronyi2021identification}, among others. However, implementing bounds approaches often requires estimating conditional moments given $X$. When $X$ represents a network matrix, conditional moment estimation may be especially challenging. In a panel data setting, \citet{pakel2021bounds} propose a bounding strategy that avoids the curse of dimensionality associated with conditioning covariates. Extending their approach to network settings is an interesting question for future work.

Lastly, in this section we have focused on binary choice models. The situation may be more favorable, in the sense of there existing informative functions $\phi$ and $\psi$, in models with continuous outcomes such as the CES specification (\ref{eq_AKM_CES}).

\section{Remarks on estimation\label{sec_remarks}}

To close our discussion, we briefly outline some possibilities for estimation of parameters and average effects, without providing details. 

Given a moment function $\phi$ as in Proposition \ref{propo_FD}, and a realization $(y,x)$ from the joint distribution of $(Y,X)$, one can estimate $\theta$ based on
$$\widehat{\theta}=\underset{\theta}{\mbox{argmin}}\, \left\|\phi_{\theta}(y,x)\right\|,$$
for some norm $\|\cdot\|$. In some models, this approach will deliver familiar estimators. For example, in the linear model (\ref{eq_mod_lin}), an estimator of $\beta$ based on (\ref{eq_restr_beta}) is
the ``quasi-differencing'' estimator\begin{equation}\widehat{\beta}=\left[x_2'(I_n-x_1x_1^{\dagger})x_2\right]^{-1}x_2'(I_n-x_1x_1^{\dagger})y,\label{eq_beta}\end{equation}
and an estimator of $\sigma^2$ based on (\ref{eq_restr_sigma2}) is the ``degree-of-freedom-corrected'' estimator
\begin{equation}\widehat{\sigma}^2=\frac{(y-x_2\widehat{\beta})'[I_n-x_1x_1^{\dagger}](y-x_2\widehat{\beta})}{\mbox{Trace}(I_n-x_1x_1^{\dagger})}.\label{eq_sigma2}\end{equation}

When constructing a function $\phi$ using the entire data is impractical, one can construct a set of functions $\phi_{\theta}^{(k)}(y,x)$ that depend on $y$ and $x$ only though a subset of the data. An estimator of $\theta$ is then
$$\widehat{\theta}=\underset{\theta}{\mbox{argmin}}\, \left\|\sum_{k=1}^K\phi_{\theta}^{(k)}(y,x)\right\|.$$
In the logit network formation model (\ref{eq_graham}), taking $\phi$ as in (\ref{eq_homoph1}), (\ref{eq_homoph2}), and (\ref{eq_homoph3}) (alongside its permutations), leads to estimators in the spirit of the tetrad logit estimator of \citet{graham2017econometric}.  

When focusing on average effects, a possible estimation approach based on Proposition \ref{prop_FD_marg} consists in setting
$$\widehat{\mu}=\psi_{\widehat{\theta}}(y,x),$$
for some estimator $\widehat{\theta}$. For example, in the linear model (\ref{eq_mod_lin}), an estimator of the quadratic form $\mu=\mathbb{E}[A'QA]$ based on (\ref{eq_psi_lin}) is \begin{equation*}\widehat{\mu}=(y-x_2\widehat{\beta})'(x_1^{\dagger})'Qx_1^{\dagger}(y-x_2\widehat{\beta})-\widehat{\sigma}^2\mbox{Trace}((x_1^{\dagger})'Qx_1^{\dagger}),\end{equation*}
where $\widehat{\beta}$ and $\widehat{\sigma}^2$ are given by (\ref{eq_beta}) and (\ref{eq_sigma2}), respectively. This corresponds to the bias-corrected estimator of \citet{andrews2008high}. For other average effects, regularization is typically needed for reliable estimation.

For all these estimators, there are important questions that remain to be addressed. What are their asymptotic properties (under suitable assumptions on how the network grows with the sample size)? How to conduct feasible inference on the population parameters? And, out of the available functions $\phi$ and $\psi$, how to choose a small subset of those (for tractability) without sacrificing too much precision (for efficiency)? Answering these questions will be an important task for future work. 

\clearpage

  \bibliographystyle{econometrica}
  \bibliography{biblio} 

@article{guell2007binding,
  title={How binding are legal limits? Transitions from temporary to permanent work in Spain},
  author={G{\"u}ell, Maia and Petrongolo, Barbara},
  journal={Labour economics},
  volume={14},
  number={2},
  pages={153--183},
  year={2007},
  publisher={Elsevier}
}

@article{dhaene2015split,
	title={Split-panel jackknife estimation of fixed-effect models},
	author={Dhaene, Geert and Jochmans, Koen},
	journal={The Review of Economic Studies},
	volume={82},
	number={3},
	pages={991--1030},
	year={2015},
	publisher={Oxford University Press}
}

@techreport{lachowska2023work,
	title={Work hours mismatch},
	author={Lachowska, Marta and Mas, Alexandre and Saggio, Raffaele and Woodbury, Stephen A},
	year={2023},
	institution={National Bureau of Economic Research}
}

@article{benson2019promotions,
	title={Promotions and the peter principle},
	author={Benson, Alan and Li, Danielle and Shue, Kelly},
	journal={The Quarterly Journal of Economics},
	volume={134},
	number={4},
	pages={2085--2134},
	year={2019},
	publisher={Oxford University Press}
}

@article{margolis1996cohort,
	title={Cohort effects and returns to seniority in France},
	author={Margolis, David N},
	journal={Annales d'Economie et de Statistique},
	pages={443--464},
	year={1996},
	publisher={JSTOR}
}

@article{davezies2021identification,
	title={Identification and estimation of average marginal effects in fixed effects logit models},
	author={Davezies, Laurent and D'Haultfoeuille, Xavier and Laage, Louise},
	journal={arXiv preprint arXiv:2105.00879},
	year={2021}
}

@article{chernozhukov2013average,
	title={Average and quantile effects in nonseparable panel models},
	author={Chernozhukov, Victor and Fern{\'a}ndez-Val, Iv{\'a}n and Hahn, Jinyong and Newey, Whitney},
	journal={Econometrica},
	volume={81},
	number={2},
	pages={535--580},
	year={2013},
	publisher={Wiley Online Library}
}

@techreport{pakel2021bounds,
	title={Bounds on average effects in discrete choice panel data models},
	author={Pakel, Cavit and Weidner, Martin},
	year={2021},
	institution={Technical report, Working paper}
}

@article{dobronyi2021identification,
	title={Identification of dynamic panel logit models with fixed effects},
	author={Dobronyi, Christopher and Gu, Jiaying and Kim, Kyoo il},
	journal={arXiv preprint arXiv:2104.04590},
	year={2021}
}

@article{bonhomme2012functional,
	title={Functional differencing},
	author={Bonhomme, St{\'e}phane},
	journal={Econometrica},
	volume={80},
	number={4},
	pages={1337--1385},
	year={2012},
	publisher={Wiley Online Library}
}

@article{arellano2012identifying,
	title={Identifying distributional characteristics in random coefficients panel data models},
	author={Arellano, Manuel and Bonhomme, St{\'e}phane},
	journal={The Review of Economic Studies},
	volume={79},
	number={3},
	pages={987--1020},
	year={2012},
	publisher={Oxford University Press}
}

@article{song2019firming,
	title={Firming up inequality},
	author={Song, Jae and Price, David J and Guvenen, Fatih and Bloom, Nicholas and Von Wachter, Till},
	journal={The Quarterly journal of economics},
	volume={134},
	number={1},
	pages={1--50},
	year={2019},
	publisher={Oxford University Press}
}

@article{card2013workplace,
	title={Workplace heterogeneity and the rise of West German wage inequality},
	author={Card, David and Heining, J{\"o}rg and Kline, Patrick},
	journal={The Quarterly journal of economics},
	volume={128},
	number={3},
	pages={967--1015},
	year={2013},
	publisher={MIT Press}
}

@article{chamberlain1992efficiency,
	title={Efficiency bounds for semiparametric regression},
	author={Chamberlain, Gary},
	journal={Econometrica: Journal of the Econometric Society},
	pages={567--596},
	year={1992},
	publisher={JSTOR}
}

@article{de2018identifying,
	title={Identifying preferences in networks with bounded degree},
	author={De Paula, {\'A}ureo and Richards-Shubik, Seth and Tamer, Elie},
	journal={Econometrica},
	volume={86},
	number={1},
	pages={263--288},
	year={2018},
	publisher={Wiley Online Library}
}

@article{becker1973theory,
	title={A theory of marriage: Part I},
	author={Becker, Gary S},
	journal={Journal of Political economy},
	volume={81},
	number={4},
	pages={813--846},
	year={1973},
	publisher={The University of Chicago Press}
}

@techreport{abowd2002computing,
	title={Computing person and firm effects using linked longitudinal employer-employee data},
	author={Abowd, John M and Creecy, Robert H and Kramarz, Francis},
	year={2002},
	institution={Center for Economic Studies, US Census Bureau}
}

@article{honore2020moment,
	title={Moment conditions for dynamic panel logit models with fixed effects},
	author={Honor{\'e}, Bo E and Weidner, Martin},
	journal={arXiv preprint arXiv:2005.05942},
	year={2020}
}

@article{honore2021dynamic,
	title={Dynamic ordered panel logit models},
	author={Honor{\'e}, Bo E and Muris, Chris and Weidner, Martin},
	journal={arXiv preprint arXiv:2107.03253},
	year={2021}
}

@article{sheng2020structural,
	title={A structural econometric analysis of network formation games through subnetworks},
	author={Sheng, Shuyang},
	journal={Econometrica},
	volume={88},
	number={5},
	pages={1829--1858},
	year={2020},
	publisher={Wiley Online Library}
}

@article{carrasco2007linear,
	title={Linear inverse problems in structural econometrics estimation based on spectral decomposition and regularization},
	author={Carrasco, Marine and Florens, Jean-Pierre and Renault, Eric},
	journal={Handbook of econometrics},
	volume={6},
	pages={5633--5751},
	year={2007},
	publisher={Elsevier}
}

@book{engl1996regularization,
	title={Regularization of inverse problems},
	author={Engl, Heinz Werner and Hanke, Martin and Neubauer, Andreas},
	volume={375},
	year={1996},
	publisher={Springer Science \& Business Media}
}

@article{gualdani2021econometric,
	title={An econometric model of network formation with an application to board interlocks between firms},
	author={Gualdani, Cristina},
	journal={Journal of Econometrics},
	volume={224},
	number={2},
	pages={345--370},
	year={2021},
	publisher={Elsevier}
}

@article{andersen1970asymptotic,
	title={Asymptotic properties of conditional maximum-likelihood estimators},
	author={Andersen, Erling Bernhard},
	journal={Journal of the Royal Statistical Society: Series B (Methodological)},
	volume={32},
	number={2},
	pages={283--301},
	year={1970},
	publisher={Wiley Online Library}
}

@book{rasch1960studies,
	title={Studies in mathematical psychology: I. Probabilistic models for some intelligence and attainment tests},
	author={Rasch, Georg},
	year={1960},
	publisher={Nielsen \& Lydiche}
}

@article{kuersteiner2020dynamic,
	title={Dynamic spatial panel models: Networks, common shocks, and sequential exogeneity},
	author={Kuersteiner, Guido M and Prucha, Ingmar R},
	journal={Econometrica},
	volume={88},
	number={5},
	pages={2109--2146},
	year={2020},
	publisher={Wiley Online Library}
}

@article{davezies2020fixed,
	title={Fixed Effects Binary Choice Models with Three or More Periods},
	author={Davezies, Laurent and D'Haultfoeuille, Xavier and Mugnier, Martin},
	journal={arXiv preprint arXiv:2009.08108},
	year={2020}
}

@article{arellano1991some,
	title={Some tests of specification for panel data: Monte Carlo evidence and an application to employment equations},
	author={Arellano, Manuel and Bond, Stephen},
	journal={The review of economic studies},
	volume={58},
	number={2},
	pages={277--297},
	year={1991},
	publisher={Wiley-Blackwell}
}

@techreport{bonhomme2023identification,
	title={Identification in a Binary Choice Panel Data Model with a Predetermined Covariate},
	author={Bonhomme, St{\'e}phane and Dano, Kevin and Graham, Bryan S},
	year={2023},
	institution={National Bureau of Economic Research}
}

@article{dhaene2023approximate,
	title={Approximate Functional Differencing},
	author={Dhaene, Geert and Weidner, Martin},
	journal={arXiv preprint arXiv:2301.13736},
	year={2023}
}

@techreport{lentz2022anatomy,
	title={The Anatomy of Sorting-Evidence from Danish Data},
	author={Lentz, Rasmus and Piyapromdee, Suphanit and Robin, Jean-Marc},
	year={2022},
	institution={HAL}
}

@article{fernandez2016individual,
	title={Individual and time effects in nonlinear panel models with large N, T},
	author={Fern{\'a}ndez-Val, Iv{\'a}n and Weidner, Martin},
	journal={Journal of Econometrics},
	volume={192},
	number={1},
	pages={291--312},
	year={2016},
	publisher={Elsevier}
}

@techreport{graham2020sparse,
	title={Sparse network asymptotics for logistic regression},
	author={Graham, Bryan S},
	year={2020},
	institution={National Bureau of Economic Research}
}

@article{graham2017econometric,
	title={An econometric model of network formation with degree heterogeneity},
	author={Graham, Bryan S},
	journal={Econometrica},
	volume={85},
	number={4},
	pages={1033--1063},
	year={2017},
	publisher={Wiley Online Library}
}

@article{bonhomme2019distributional,
	title={A distributional framework for matched employer employee data},
	author={Bonhomme, St{\'e}phane and Lamadon, Thibaut and Manresa, Elena},
	journal={Econometrica},
	volume={87},
	number={3},
	pages={699--739},
	year={2019},
	publisher={Wiley Online Library}
}

@article{postel2002equilibrium,
	title={Equilibrium wage dispersion with worker and employer heterogeneity},
	author={Postel--Vinay, Fabien and Robin, Jean--Marc},
	journal={Econometrica},
	volume={70},
	number={6},
	pages={2295--2350},
	year={2002},
	publisher={Wiley Online Library}
}

@article{shimer2000assortative,
	title={Assortative matching and search},
	author={Shimer, Robert and Smith, Lones},
	journal={Econometrica},
	volume={68},
	number={2},
	pages={343--369},
	year={2000},
	publisher={Wiley Online Library}
}

@article{ahmadpoor2019decoding,
	title={Decoding team and individual impact in science and invention},
	author={Ahmadpoor, Mohammad and Jones, Benjamin F},
	journal={Proceedings of the National Academy of Sciences},
	volume={116},
	number={28},
	pages={13885--13890},
	year={2019},
	publisher={National Acad Sciences}
}

@article{hahn2004jackknife,
	title={Jackknife and analytical bias reduction for nonlinear panel models},
	author={Hahn, Jinyong and Newey, Whitney},
	journal={Econometrica},
	volume={72},
	number={4},
	pages={1295--1319},
	year={2004},
	publisher={Wiley Online Library}
}

@article{bickel2009nonparametric,
	title={A nonparametric view of network models and Newman--Girvan and other modularities},
	author={Bickel, Peter J and Chen, Aiyou},
	journal={Proceedings of the National Academy of Sciences},
	volume={106},
	number={50},
	pages={21068--21073},
	year={2009},
	publisher={National Acad Sciences}
}

@article{woodcock2008wage,
	title={Wage differentials in the presence of unobserved worker, firm, and match heterogeneity},
	author={Woodcock, Simon D},
	journal={Labour Economics},
	volume={15},
	number={4},
	pages={771--793},
	year={2008},
	publisher={Elsevier}
}

@article{bonhomme2020much,
	title={How much should we trust estimates of firm effects and worker sorting?},
	author={Bonhomme, St{\'e}phane and Holzheu, Kerstin and Lamadon, Thibaut and Manresa, Elena and Mogstad, Magne and Setzler, Bradley},
	journal={Journal of Labor Economics},
	volume={41},
	number={2},
	pages={291--322},
	year={2023},
	publisher={The University of Chicago Press Chicago, IL}
}

@article{charbonneau2017multiple,
	title={Multiple fixed effects in binary response panel data models},
	author={Charbonneau, Karyne B},
	journal={The Econometrics Journal},
	volume={20},
	number={3},
	pages={S1--S13},
	year={2017},
	publisher={Oxford University Press Oxford, UK}
}

@article{kline2020leave,
	title={Leave-out estimation of variance components},
	author={Kline, Patrick and Saggio, Raffaele and S{\o}lvsten, Mikkel},
	journal={Econometrica},
	volume={88},
	number={5},
	pages={1859--1898},
	year={2020},
	publisher={Wiley Online Library}
}

@article{andrews2008high,
	title={High wage workers and low wage firms: negative assortative matching or limited mobility bias?},
	author={Andrews, Martyn J and Gill, Leonard and Schank, Thorsten and Upward, Richard},
	journal={Journal of the Royal Statistical Society: Series A (Statistics in Society)},
	volume={171},
	number={3},
	pages={673--697},
	year={2008},
	publisher={Wiley Online Library}
}

@article{abowd1999high,
	title={High wage workers and high wage firms},
	author={Abowd, John M and Kramarz, Francis and Margolis, David N},
	journal={Econometrica},
	volume={67},
	number={2},
	pages={251--333},
	year={1999},
	publisher={Wiley Online Library}
}

@incollection{bonhomme2020econometric,
	title={Econometric analysis of bipartite networks},
	author={Bonhomme, St{\'e}phane},
	booktitle={The Econometric Analysis of Network Data},
	pages={83--121},
	year={2020},
	publisher={Elsevier}
}

@article{bonhomme2021teams,
	title={Teams: Heterogeneity, sorting, and complementarity},
	author={Bonhomme, St{\'e}phane},
	journal={University of Chicago, Becker Friedman Institute for Economics Working Paper},
	number={2021-15},
	year={2021}
}

@article{dano2023transition,
  title={Transition Probabilities and Identifying Moments in Dynamic Fixed Effects Logit Models},
  author={Dano, Kevin},
  journal={arXiv preprint arXiv:2303.00083},
  year={2023}
}

@article{aguirregabiria2021identification,
  title={Identification of average marginal effects in fixed effects dynamic discrete choice models},
  author={Aguirregabiria, Victor and Carro, Jes{\'u}s M},
  journal={arXiv preprint arXiv:2107.06141},
  year={2021}
}

@article{hughes2022estimating,
  title={Estimating Nonlinear Network Data Models with Fixed Effects},
  author={Hughes, David William},
  journal={arXiv preprint arXiv:2203.15603},
  year={2022}
}

\clearpage

\appendix

\begin{center}
	{ {\LARGE APPENDIX} }
\end{center}

\section{Proofs\label{app_proofs}}

\subsection{Proofs of Propositions \ref{propo_FD} and \ref{prop_FD_marg}}

Propositions \ref{propo_FD} and \ref{prop_FD_marg} follow directly from the following elementary lemma whose proof we include for completeness.

\begin{lemma}\label{lem_for_proofs}
	Let $g:{\cal{Z}}\rightarrow \mathbb{R}^p$ be a function such that ${\sup}_{z\in{\cal{Z}}}\, \|g(z)\|_{\infty}<\infty $, where $\|\cdot\|_{\infty}$ denotes the sup norm on $\mathbb{R}^p$. Suppose that, for all non-negative functions $h:{\cal{Z}}\rightarrow \mathbb{R}^{+}$ such that $\int_{\cal{Z}}h(z)dz=1$, we have $\int_{\cal{Z}}g(z)h(z)dz=0$. Then $g(z)=0$ almost everywhere on ${\cal{Z}}$.
	
\end{lemma}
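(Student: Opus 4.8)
The plan is to prove the contrapositive, or rather to directly exploit the freedom in choosing the density $h$. Suppose, for contradiction, that $g$ is not zero almost everywhere on ${\cal{Z}}$. Then there is some coordinate $k \in \{1,\dots,p\}$ such that the set where $g_k$ is nonzero has positive Lebesgue measure; without loss of generality (splitting into the sets where $g_k>0$ and where $g_k<0$) there is a measurable set $E \subseteq {\cal{Z}}$ with $0 < \lambda(E) < \infty$ and a constant $c>0$ such that $g_k(z) \geq c$ for all $z \in E$. The first step is to make this reduction precise, using that $\{g_k \neq 0\} = \bigcup_{m\geq 1}\{g_k \geq 1/m\} \cup \{g_k \leq -1/m\}$, so one of these countably many sets has positive measure, and then intersecting with a ball of large radius to get finite measure.

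Next I would construct the test density. Set $h(z) = \lambda(E)^{-1}\boldsymbol{1}\{z \in E\}$. This is a non-negative function on ${\cal{Z}}$ with $\int_{\cal{Z}} h(z)\,dz = 1$, so by hypothesis $\int_{\cal{Z}} g(z) h(z)\,dz = 0$ in $\mathbb{R}^p$, and in particular the $k$th coordinate vanishes: $\lambda(E)^{-1}\int_E g_k(z)\,dz = 0$. But $g_k \geq c$ on $E$ gives $\lambda(E)^{-1}\int_E g_k(z)\,dz \geq c > 0$, a contradiction. (The boundedness hypothesis ${\sup}_z \|g(z)\|_\infty < \infty$ is what guarantees all these integrals are finite and the vector-valued integral is well defined; it should be invoked here.) Hence $g = 0$ almost everywhere.

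The main obstacle, such as it is, is purely a matter of care rather than difficulty: one must make sure the indicator of $E$ is a legitimate choice of $h$ (it is, since $\lambda(E) \in (0,\infty)$), handle the vector-valued nature of $g$ by reducing to a single coordinate, and deal with the sign by splitting $\{g_k \neq 0\}$ into its positive and negative parts. There is no analytic subtlety; the lemma is essentially the statement that a bounded function integrating to zero against every probability density must vanish a.e., proved by testing against uniform densities on small sets. One could alternatively invoke the Lebesgue differentiation theorem, but the elementary argument above is cleaner and self-contained, matching the paper's stated intent to "include for completeness."
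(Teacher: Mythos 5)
Your proof is correct, and it takes a genuinely different route from the paper's. The paper does not argue by contradiction: it fixes a coordinate $g_k$, takes an \emph{arbitrary} $\ell\in{\cal L}^1({\cal Z})$, splits it as $\ell=\ell^{+}-\ell^{-}$, applies the hypothesis to the normalized pieces $\ell^{+}/\|\ell^{+}\|_1$ and $\ell^{-}/\|\ell^{-}\|_1$ to conclude $\int g_k\,\ell=0$ for every $\ell\in{\cal L}^1$, and then invokes the ${\cal L}^1$--${\cal L}^\infty$ duality identity $\sup_{\|\ell\|_1=1}\bigl|\int g_k\,\ell\bigr|=\sup_z|g_k(z)|$ (here boundedness of $g$ is what puts $g_k$ in ${\cal L}^\infty$ so the identity applies) to get $g_k=0$ a.e. You instead argue by contraposition: you extract, via the countable union $\{g_k\neq 0\}=\bigcup_m\{g_k\geq 1/m\}\cup\{g_k\leq -1/m\}$ and intersection with a large ball, a set $E$ of positive finite measure on which $g_k$ is bounded away from zero, and test against the single uniform density $\lambda(E)^{-1}\boldsymbol{1}_E$, using boundedness only to ensure integrability. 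Your version is somewhat more self-contained, since the duality identity the paper cites is itself usually proved by a Chebyshev-type argument very close to yours; the paper's version buys the slightly stronger intermediate statement that $g_k$ annihilates all of ${\cal L}^1$, which packages the sign-splitting once and for all rather than inside a case analysis. Both arguments implicitly use measurability of $g$, which is a harmless standing assumption here.
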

\begin{proof}
	Let $k\in\{1,...,p\}$, and, for all $z\in{\cal{Z}}$, let $g_k(z)$ denote the $k$-th element of $g(z)$. For all $\ell\in {\cal{L}}^{1}(\cal{Z})$, let $\|\ell\|_1=\int_{\cal{Z}}|\ell(z)|dz$, $\ell^{+}(z)=\mbox{max}(\ell(z),0)$, and $\ell^{-}(z)=-\mbox{min}(\ell(z),0)$. For all $\ell\in {\cal{L}}^{1}(\cal{Z})$ we have
	\begin{align*}\int_{\cal{Z}}g_k(z)\ell(z)dz=\|\ell^{+}\|_1\int_{\cal{Z}}g_k(z)\frac{\ell^{+}(z)}{\|\ell^{+}\|_1}dz-\|\ell^{-}\|_1\int_{\cal{Z}}g_k(z)\frac{\ell^{-}(z)}{\|\ell^{-}\|_1}dz=0,\end{align*}
	where we have used that $\frac{\ell^{+}(z)}{\|\ell^{+}\|_1}$ and $\frac{\ell^{-}(z)}{\|\ell^{-}\|_1}$ are non-negative and integrate to one (with the convention $0/0=0$ whenever $\|\ell^{+}\|_1=0$ or $\|\ell^{-}\|_1=0$). Since $g_k$ is bounded, it follows that 
	$${\sup}_{z\in{\cal{Z}}}\, |g_k(z)|={\sup}_{\ell\in {\cal{L}}^{1}({\cal{Z}})\, :\, \|\ell\|_1=1}\,\left|\int_{\cal{Z}}g_k(z)\ell(z)dz\right| =0,$$
	so $g_k=0$ almost everywhere on ${\cal{Z}}$. Lastly, since this holds for all $k\in\{1,...,p\}$, it follows that $g=0$ almost everywhere on ${\cal{Z}}$.
\end{proof}

\paragraph{Proof of Proposition \ref{propo_FD}.} It is sufficient to show that (i) implies (ii). Suppose that (i) holds. Let $Z=(A,X)$, and $g(Z)=\mathbb{E}\left[\phi_{\theta}(Y,X)\,|\, A,X\right]$. It follows from (i) and Lemma \ref{lem_for_proofs} that $g=0$ almost everywhere on ${\cal{Z}}$. This shows (ii) and completes the proof.

\paragraph{Proof of Proposition \ref{prop_FD_marg}.} Let $Z=(A,X)$, and $g(Z)=\mathbb{E}\left[\psi_{\theta}(Y,X)\,|\, A,X\right]-m_{\theta}(A,X)$. It follows from (i) and Lemma \ref{lem_for_proofs} that $g=0$ almost everywhere on ${\cal{Z}}$. This shows (ii) and completes the proof.

\subsection{Proof of Proposition \ref{prop_linear_quad}}

Let $n_1=n-n_2$. Then $u_1$ is an $n\times n_1$ matrix, and $v_1$ is an $n_1\times 1$ vector.\\
Let
$$\varphi_{\theta}(v_1,x)= \frac{1}{(2\pi \sigma^2)^{\frac{n_2}{2}}}\int \psi_{\theta}(x_2\beta+u_1v_1+u_2v_2,x)\exp\left(-\frac{1}{2\sigma^2}v_2'v_2\right)dv_2.$$
It follows from (\ref{eq_prop4_forproof}) that (\ref{eq_FD_marg}) is equivalent to
\begin{align*}
	\frac{1}{(2\pi \sigma^2)^{\frac{n_1}{2}}}\int \varphi_{\theta}(v_1,x)\exp\left(-\frac{1}{2\sigma^2}(v_1-u_1'x_1a)'(v_1-u_1'x_1a)\right)dv_1=a'Qa.	\end{align*}
As a result, (\ref{eq_FD_marg}) is equivalent to
\begin{align*}
	& \frac{1}{(2\pi \sigma^2)^{\frac{n_1}{2}}}\int \varphi_{\theta}(v_1,x)\exp\left(-\frac{1}{2\sigma^2}(v_1-b)'(v_1-b)\right)dv_1=b'((u_1'x_1)^{\dagger})'Q(u_1'x_1)^{\dagger}b,	\end{align*}
where $b=u_1'x_1 a$, and we have used that $x_1$ has full column rank.\\
Let ${\cal{F}}$ denote the Fourier transform operator. For any integrable function $f:\mathbb{R}^{n_1}\rightarrow \mathbb{R}$ we have, for all $s\in \mathbb{R}^{n_1}$,
$${\cal{F}}[f](s)=\int f(x)\exp(\boldsymbol{i}s'x)dx,$$
where $\boldsymbol{i}$ is a complex root of $-1$, and the integral is over $\mathbb{R}^{n_1}$.\\
We have
$${\cal{F}}\left[\varphi_{\theta}(v_1,x)\right](s)\exp\left(-\frac{\sigma^2}{2}s's\right)={\cal{F}}\left[b'((u_1'x_1)^{\dagger})'Q(u_1'x_1)^{\dagger}b\right](s).$$
Let $C=((u_1'x_1)^{\dagger})'Q(u_1'x_1)^{\dagger}$. We have, for $\delta(\cdot)$ the Dirac delta function,
\begin{align*}{\cal{F}}\left[b'Cb\right](s)&=\sum_{i,j}c_{ij}{\cal{F}}\left[b_ib_j\right](s)\\
	&=\sum_ic_{ii} {\cal{F}}\left[b_i^2\right](s)+\sum_{i\neq j}c_{ij}{\cal{F}}\left[b_ib_j\right](s)\\
	&=\sum_ic_{ii} {\cal{F}}\left[b_i^2\right](s)+\sum_{i\neq j}c_{ij}{\cal{F}}\left[b_i\right](s){\cal{F}}\left[b_j\right](s)\\
	&=\sum_ic_{ii} [-(2\pi)\delta''(s_i)](2\pi)^{n_1-1}\prod_{j\neq i}\delta(s_j)\\
	&\quad +\sum_{i\neq j}c_{ij}[-i(2\pi)\delta'(s_i)][-i(2\pi)\delta'(s_j)](2\pi)^{n_1-2}\prod_{k\neq(i,j)}\delta(s_k)\\
	&=-(2\pi)^{n_1}\left(\sum_ic_{ii} \delta''(s_i)\prod_{j\neq i}\delta(s_j)+\sum_{i\neq j}c_{ij}\delta'(s_i)\delta'(s_j)\prod_{k\neq(i,j)}\delta(s_k)\right).
\end{align*}
Since this holds for all $s$, the Fourier inversion theorem gives
\begin{align*}\varphi_{\theta}(v_1,x)&=\frac{1}{(2\pi)^{n_1}}\int {\cal{F}}\left[b'Cb\right](s)\exp\left(\frac{\sigma^2}{2}s's\right) \exp\left(-iv_1's\right)ds\\
	&=-\sum_ic_{ii}\int \delta''(s_i)\exp\left(\frac{\sigma^2}{2}s_i^2\right) \exp\left(-iv_{1i}s_i\right)ds_i\\
	&\quad -\sum_{i\neq j}c_{ij}\iint \delta'(s_i)\delta'(s_j)\exp\left(\frac{\sigma^2}{2}(s_i^2+s_j^2)\right) \exp\left(-i(v_{1i}s_i+v_{1j}s_j)\right)ds_ids_j.\end{align*}
Now, by integration by parts, we have
\begin{align*}&\int \delta'(s_i)\exp\left(\frac{\sigma^2}{2}s_i^2\right) \exp\left(-iv_{1i}s_i\right)ds_i=iv_{1i},\\
	&	\int \delta''(s_i)\exp\left(\frac{\sigma^2}{2}s_i^2\right) \exp\left(-iv_{1i}s_i\right)ds_i=\sigma^2-v_{1i}^2.
\end{align*}
Hence,
\begin{align*}\varphi_{\theta}(v_1,x)&=-\sum_ic_{ii}(\sigma^2-v_{1i}^2)+\sum_{i\neq j}c_{ij}v_{1i}v_{1j}\\
	&=v_1'Cv_1-\sigma^2\mbox{Trace}(C),\end{align*}
and
\begin{align*} &\frac{1}{(2\pi \sigma^2)^{\frac{n_2}{2}}}\int \psi_{\theta}(x_2\beta+u_1v_1+u_2v_2,x)\exp\left(-\frac{1}{2\sigma^2}v_2'v_2\right)dv_2\\&=v'u_1((u_1'x_1)^{\dagger})'Q(u_1'x_1)^{\dagger}u_1'v-\sigma^2\mbox{Trace}((x_1^{\dagger})'Qx_1^{\dagger})\\
	&=v'(x_1^{\dagger})'Qx_1^{\dagger}v-\sigma^2\mbox{Trace}((x_1^{\dagger})'Qx_1^{\dagger}).\end{align*}
This concludes the proof of Proposition \ref{prop_linear_quad}.
\section{Other average effects in the linear model\label{app_linear_nonquad}}

Following the arguments in the proof of Proposition \ref{prop_linear_quad}, we have
$${\cal{F}}\left[\varphi_{\theta}(v_1,x)\right](s)\exp\left(-\frac{\sigma^2}{2}s's\right)={\cal{F}}\left[m_{\theta}\left((u_1'x_1)^{\dagger}b,x\right)\right](s).$$
Hence, provided the Fourier inversion theorem can be applied, we have 
$$\varphi_{\theta}(v_1,x)={\cal{F}}^{-1}\left[{\cal{F}}\left[m_{\theta}\left((u_1'x_1)^{\dagger}b,x\right)\right](s)\exp\left(\frac{\sigma^2}{2}s's\right)\right](v_1),$$
and thus
\begin{align*}
	&\frac{1}{(2\pi \sigma^2)^{\frac{n_2}{2}}}\int \psi_{\theta}(x_2\beta+u_1v_1+u_2v_2,x)\exp\left(-\frac{1}{2\sigma^2}v_2'v_2\right)dv_2\\&\quad\quad \quad={\cal{F}}^{-1}\left[{\cal{F}}\left[m_{\theta}\left((u_1'x_1)^{\dagger}b,x\right)\right](s)\exp\left(\frac{\sigma^2}{2}s's\right)\right](v_1).
\end{align*}
As a special case, suppose $\psi_{\theta}(y,x)$ is a function of $u_1'(y-x_2\beta)$ and $x$ only. Then
\begin{align*}
	&\psi_{\theta}(y,x)={\cal{F}}^{-1}\left[{\cal{F}}\left[m_{\theta}\left((u_1'x_1)^{\dagger}b,x\right)\right](s)\exp\left(\frac{\sigma^2}{2}s's\right)\right](u_1'(y-x_2\beta)).
\end{align*}

\section{Average effects in logit models\label{app_logit}}

Let
\begin{align*}m_{\theta}(a,x)&=\frac{\exp(\theta +a)}{1+\exp(\theta +a)}-\frac{\exp(a)}{1+\exp(a)}.\end{align*}
In the panel data case with $T=2$, (\ref{eq_marg_char_logit}) can equivalently be written as
\begin{align*}&\frac{\exp(\theta +a)}{1+\exp(\theta +a)}-\frac{\exp(a)}{1+\exp(a)}\\&=\frac{	\sum_{(y_{i1},y_{i2})\in\{0,1\}^2}\psi_{\theta}(y_{i1},y_{i2},x_{i})\exp((y_{i1}+y_{i2})a+(y_{i1}x_{i1}+y_{i2}x_{i2})\theta)}{\left(\exp(a+x_{i1}\theta)+1\right)\left(\exp(a+x_{i2}\theta)+1\right)}.\end{align*}

That is,
\begin{align}&(\exp(\theta)-1)\exp(a)\left(\exp(a+x_{i1}\theta)+1\right)\left(\exp(a+x_{i2}\theta)+1\right)\notag\\&=(1+\exp(\theta +a))(1+\exp(a))\notag\\
	&\quad\times	\sum_{(y_{i1},y_{i2})\in\{0,1\}^2}\psi_{\theta}(y_{i1},y_{i2},x_{i})\exp((y_{i1}+y_{i2})a+(y_{i1}x_{i1}+y_{i2}x_{i2})\theta).\label{eq_inter}\end{align}

Let $Z=\exp(a)$. The coefficient of $Z^0$ on the left-hand side of (\ref{eq_inter}) is equal to zero, and the coefficient on the right-hand side is $\psi_{\theta}(0,0,x_{i})$. It thus follows that $\psi_{\theta}(0,0,x_{i})=0$.

The coefficient of $Z^4$ on the left-hand side of (\ref{eq_inter}) is equal to zero, and the coefficient on the right-hand side is $\exp((x_{i1}+x_{i2}+1)\theta)\psi_{\theta}(1,1,x_{i})$. It thus follows that $\psi_{\theta}(1,1,x_{i})=0$.
  
Hence,
\begin{align}&(\exp(\theta)-1)\left(\exp(a+x_{i1}\theta)+1\right)\left(\exp(a+x_{i2}\theta)+1\right)\notag\\&=(1+\exp(\theta +a))(1+\exp(a))	\left[\psi_{\theta}(1,0,x_{i})\exp(x_{i1}\theta)+\psi_{\theta}(0,1,x_{i})\exp(x_{i2}\theta)\right].\label{eq_inter3}\end{align}
So the existence of valid moment functions requires that the ratio $$\frac{(\exp(\theta)-1)\left(\exp(a+x_{i1}\theta)+1\right)\left(\exp(a+x_{i2}\theta)+1\right)}{(1+\exp(\theta +a))(1+\exp(a))}$$ does not depend on $a$. This is not possible if $x_{i1}=x_{i2}$. However, if $x_{i1}\neq x_{i2}$ then (\ref{eq_inter3}) simplifies to 
\begin{align*}&\exp(\theta)-1=	\psi_{\theta}(1,0,x_{i})\exp(x_{i1}\theta)+\psi_{\theta}(0,1,x_{i})\exp(x_{i2}\theta),\end{align*}
which is indeed identical to (\ref{eq_psi_logit_2}).

 Next, consider Configuration C in Figure \ref{Fig_bipart2}, and let 
 \begin{align*}m_{\theta}(a,x)&= \frac{\exp(\theta +a_1+a_3)}{1+\exp(\theta +a_1+a_3)}-\frac{\exp(a_1+a_3)}{1+\exp(a_1+a_3)}.\end{align*}
 
 By (\ref{eq_marg_char_logit}), we look for $\psi$ such that
   \begin{align*}
 &\frac{\exp(\theta +a_1+a_3)}{1+\exp(\theta +a_1+a_3)}-\frac{\exp(a_1+a_3)}{1+\exp(a_1+a_3)}=	\frac{	\sum_{y\in\{0,1\}^4}\psi_{\theta}(y,x)\prod_{i=1}^4\exp(y_ix_{i1}'a+y_ix_{i2}'\theta)}{\prod_{i=1}^4\left(\exp(x_{i1}'a+x_{i2}'\theta)+1\right)}.
 \end{align*}	
 
 That is,
 \begin{align}
 	&(\exp(\theta)-1)\exp(a_1+a_3)\left(\exp(a_1+a_3+x_{12}\theta)+1\right)\left(\exp(a_1+a_4+x_{22}\theta)+1\right)\notag\\
 	&\quad \times\left(\exp(a_2+a_3+x_{32}\theta)+1\right)\left(\exp(a_2+a_4+x_{42}\theta)+1\right)\notag\\
 	&=	\sum_{y\in\{0,1\}^4}\psi_{\theta}(y,x)\exp\bigg(y_1(a_1+a_3+x_{12}\theta)+y_2(a_1+a_4+x_{22}\theta)+y_3(a_2+a_3+x_{32}\theta)\notag\\
 	&\quad\quad\quad  +y_4(a_2+a_4+x_{42}\theta)\bigg)(1+\exp(\theta +a_1+a_3))(1+\exp(a_1+a_3)).\label{eq_inter2}
 \end{align}	

Let $Z_k=\exp(a_k)$ for $k\in\{1,...,4\}$. The left-hand side and right-hand side in (\ref{eq_inter2}) are polynomials in $(Z_1,...,Z_4)$. Since they are equal to each other for all $a_1,...,a_4$ in $\mathbb{R}$, hence for all $Z_1,...,Z_4$ in $\mathbb{R}_{>0}$, they are equal to each other for all $Z_1,...,Z_4$ in $\mathbb{R}$ as well. Suppose next that $\theta\neq 0$. The right-hand side in (\ref{eq_inter2}) is a multiple of $(1+Z_1Z_3)$ and $(1+\exp(\theta)Z_1Z_3)$. However, the left-hand side in (\ref{eq_inter2}) is either a multiple of $(1+Z_1Z_3)$ or a multiple of $(1+\exp(\theta)Z_1Z_3)$ (depending on the value of $x_{12}$) but it is not a multiple of both terms. It follows that, when $\theta\neq 0$, there is no $\psi$ satisfying (\ref{eq_inter2}). A direct comparison of the monomial terms of the polynomials on both sides of equation (\ref{eq_inter2}) confirms this argument.

\end{document}